\newtheorem{theorem}{Theorem}
\newtheorem{remark}{Remark}
\newtheorem{lemma}{Lemma}
\title{Entropy-conservative high-order methods for high-enthalpy gas flows}
\author{\href{https://orcid.org/0000-0002-1434-7166}{\includegraphics[scale=0.06]{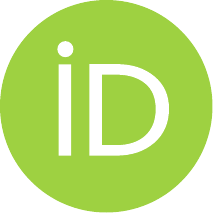}\hspace{1mm}Georgii Oblapenko}\thanks{
Preprint submitted to Computers and Fluids}
 , \href{https://orcid.org/0000-0002-1434-7166}{\includegraphics[scale=0.06]{orcid.pdf}\hspace{1mm}Manuel Torrilhon}\\
	Chair of Applied and Computational Mathematics, RWTH Aachen, \\
	Schinkelstrasse 52, 52062 Aachen, Germany \\
	$^\ast$Corresponding author. E-mail: \texttt{oblapenko@acom.rwth-aachen.de}
}
\begin{document}
\maketitle

\begin{abstract}
A framework for numerical evaluation of entropy-conservative volume fluxes in gas flows with internal energies is developed, for use with high-order discretization methods. The novelty of the approach lies in the ability to use arbitrary expressions for the internal degrees of freedom of the constituent gas species. The developed approach is implemented in an open-source discontinuous Galerkin code for solving hyperbolic equations. Numerical simulations are carried out for several model 2-D flows and the results are compared to those obtained with the finite volume-based solver DLR TAU.
\end{abstract}

\keywords{Gas dynamics, discontinuous Galerkin, entropy stability}




\section{Introduction}\label{sec1}
Accurate simulation of non-equilibrium flows with chemical reactions and internal degrees of freedom is crucial for a multitude of aerospace applications~\cite{candler_computation_1991,nagnibeda2009non,karl2024sustainable}. For such flows, using higher-order discontinuous Galerkin (DG) methods is an attractive prospect, as it enables a more accurate simulation of turbulent effects, lessens the requirements on grid-shock alignment, improves computational efficiency due to data locality, and adds additional solution adaptivity capabilities via $p$-adaptivity~\cite{wang_high-order_2013,hempert2017simulation,kronbichler2021high,basile2022unstructured,mossier2022p,hoskin2023discontinuous,peyvan2023high}. DG methods are also the natural framework for polynomial expansion-based uncertainty quantification and sensitivity analysis studies~\cite{smith2013uncertainty,kusch2020filtered}, an area of research with growing importance, as larger and larger parts of the R\&D cycle in aerospace are carried out via simulation tools.
The less stringent requirements of DG methods with regards to shock-grid alignment~\cite{ching2019shock} are also useful for uncertainty quantification and sensitivity analysis studies, as those oftentimes require automated simulation of hundreds and thousands of test cases, where automated well-aligned mesh generation is possible only for simple geometries~\cite{cortesi2020forward}.

Development of DG methods suitable for aerospace applications is still an active area of research, as questions of numerical stability, shock capturing, and insufficient numerical diffusion inherent to higher-order methods need to be addressed. Recent achievements in these areas include the DG spectral element method with Legendre-Gauss-Lobatto quadratures (DGSEM-LGL)~\cite{friedrich2018entropy,gassner2021novel,renac2021entropy}, artificial viscosity approaches~\cite{michoski2016comparison,ching2019shock,bai2022continuous}, shock filtering~\cite{michoski2016comparison,bohm2019multi}, and subcell limiting techniques~\cite{hennemann2021provably,rueda2022subcell,lin2024high,rueda2024monolithic}.

Another question is the choice of the numerical flux functions: from a physics point of view, a crucial desired property of a numerical scheme is entropy conservation~\cite{tadmor2003entropy,jameson2008formulation,ismail2009affordable,fisher2013high}, especially in the context of DG methods~\cite{chandrashekar2013kinetic,chan2018discretely,ranocha2018comparison,gassner2021novel}. For summation-by-part (SBP) schemes, ensuring entropy conservation/entropy stability at the semi-discrete level requires entropy-conservative numerical flux functions for the volume and surface fluxes~\cite{ranocha2018comparison,gassner2021novel}. Augmented by a dissipative surface flux function, the spatial semi-discretization then guarantees entropy stability~\cite{chan2018discretely}.
Multiple entropy-conservative flux functions have been developed~\cite{ranocha2018comparison}, but they usually assume restrictions on the description of the internal energy of the gas flow. Quite often, a constant ratio of specific heats is assumed. Whilst this greatly simplifies the derivation of numerical fluxes, it is not applicable to the kind of high-temperature flows encountered in real-life combustion and aerospace applications, where excitation internal degrees of freedom plays a significant role~\cite{candler_computation_1991,mcbride2002nasa,nagnibeda2009non,karl2024sustainable}. Therefore, development of new numerical fluxes that possess attractive properties, such as preservation of entropy, kinetic energy, and pressure equilibrium, for gases with non-trivial internal energy functions and/or equations of state is of very high relevance to the applied CFD community in order to drive the development of new robust higher-order methods for large-scale simulations.

Several works have concerned themselves with exactly such developments. In~\cite{gouasmi2020formulation,ching2024positivity}, entropy-conservative schemes were developed for gases with internal degrees of freedom were the internal energy is modelled by a polynomial function of  temperature~\cite{mcbride2002nasa}.  In~\cite{johnson2020conservative}, the DG method was applied to the simulation of reacting mixtures with special attention paid to the numerical evaluation of fluxes that avoid generating spurious pressure oscillations; the internal energies were also modelled using NASA polynomials.
In~\cite{aiello2024entropy}, entropy-conservative discretizations have been proposed for gases with an arbitrary (non-ideal gas law) equation of state; the impact of internal degrees of freedom on the ideal gas effects were also modelled by polynomial functions of temperature.
In~\cite{osti_1763209}, entropy-stable flux functions were derived for gas mixtures with internal energies described either by polynomial fits, or with vibrational spectra modelled by the infinite harmonic oscillator model. Flux functions for mixtures with thermal non-equilibrium between the vibrational and rotational-translational modes were derived as well.
Similarly, in~\cite{peyvan2023high}, an approach has been proposed to construct entropy stable flux functions for mixtures of molecules with vibrational spectra described by infinite harmonic oscillators. In~\cite{abgrall2022reinterpretation} an alternative approach for evaluating entropy-conservative fluxes for gases with arbitrary equations of state was proposed, based on computation of correction terms as solutions of an optimization problem.

Thus, recent advancements in the field of structure-preserving methods for gas dynamics have significantly expanded the range of applicability of such entropy-conservative schemes to real gases with complex equations of state and dependencies of internal energy on the gas temperature. However, they usually still rely either on polynomial fits or the infinite harmonic oscillator model, which is not the most accurate description of real-life vibrational spectra of molecules. In addition, the approaches used to derive the flux functions does not easily generalize to other expressions for internal energy spectra.

In the present work, a new approach to derive entropy-conservative flux schemes is developed, based on interpolation of the specific heats and entropy integrals. It is applied to the simulation of inviscid high-enthalpy flows over a cylinder, and compared to results obtained with the second order finite volume-based DLR TAU solver~\cite{mack2002validation}.

The paper is structured as follows. First, the flow equations for a high-temperature gas flow are presented, the thermodynamic properties of high-temperature gases are discussed, and a brief overview of the impact of various modelling assumptions is given. Then, an entropy-conservative flux is derived for the equations in question, and its computational properties are analyzed. Next, a short overview of the used simulation framework is given and results of numerical simulations are presented. Finally, future work prospects are discussed.

\section{Flow equations and thermodynamic properties}
We consider a two-dimensional flow of a single-species inviscid gas.

The compressible Euler equations governing such a flow are given by
\begin{equation}
    \frac{\partial}{\partial t}\mathbf{u} + \frac{\partial}{\partial x}\mathbf{f}_x + \frac{\partial}{\partial y}\mathbf{f}_y = 0.\label{eqns:euler-1}
\end{equation}
Here $t \in [0,t_{\mathrm{max}}]$ is the time, $x$ and $y$ are the spatial coordinates in the flow domain $\Omega \subset \mathbb{R}^2$, $\mathbf{u}$ is the vector of conservative variables, and $\mathbf{f}_x$, $\mathbf{f}_y$ are the inviscid fluxes.

The vector of conservative variables  $\mathbf{u} \in \mathbb{R}^4$ is given by
\begin{equation}
     \mathbf{u} = \left(\rho, \rho v_x, \rho v_y, E \right)^{\mathrm{T}},
\end{equation}
where $\rho$ is the density, $v_x$ and $v_y$ are the flow velocities in the $x$ and $y$ directions, and $E=\rho e=\rho \varepsilon_{\mathrm{int}} + \rho v^2 / 2$ is the total flow energy. Here, $e$ is the specific energy, and $\varepsilon_{\mathrm{int}}$ is the specific internal energy.
The inviscid fluxes are given by
\begin{equation}
     \mathbf{f}_x = \left(\rho v_x, \rho v_x^2 + p, \rho v_x v_y, (E+p)v_x \right)^{\mathrm{T}},\label{eqns:euler-flux_x}
\end{equation}
\begin{equation}
     \mathbf{f}_y = \left(\rho v_y, \rho v_x v_y, \rho v_y^2 + p, (E+p)v_y \right)^{\mathrm{T}}.\label{eqns:euler-flux_y}
\end{equation}
Here $p$ is the pressure. We use the ideal gas law to relate pressure, density, and temperature: $p=nkT$, where $n$ is the number density ($n=\rho / m$, where $m$ is the mass of the constituent gas species), $k$ is the Boltzmann constant, and $T$ is the flow temperature. All that is left to fully close the system of equations is a relation between the specific internal energy and the temperature.

\subsection{Internal energy models}
We now consider several models for $\varepsilon_{\mathrm{int}}(T)$. The simplest case is that of the calorically perfect gas, where $\varepsilon_{\mathrm{int}}(T) = c_v T$, where $c_v$ is the constant-valued specific heat at constant volume. The ratio of specific heats $\gamma=(c_v + k/m)/c_v$ in this case is also temperature-independent. The calorically perfect gas case, especially with $c_v=5/2 k/m$ (corresponding to $\gamma=1.4$), is the standard one in a large amount of literature concerned with CFD methods in general and entropy-conservative formulations in particular, and we do not consider it in detail. The factor $5/2$ comes of the contribution of the translational degrees of freedom ($3/2$) and the fully excited rotational degrees of freedom ($2/2$) for molecular gases.

In the present work, we assume that our gas consists of diatomic molecules, which possess only rotational and vibrational degrees of freedom. That is, we neglect the impact of electronic excitation and can write $\varepsilon_{\mathrm{int}}=\varepsilon_{\mathrm{tr}} + \varepsilon_{\mathrm{rot}} + \varepsilon_{\mathrm{vibr}}$, where $\varepsilon_{\mathrm{tr}}=\frac{3}{2} kT/m$ is the specific translational energy, and $\varepsilon_{\mathrm{rot}}$ and $\varepsilon_{\mathrm{vibr}}$ are the specific rotational and vibrational energies, respectively. Moreover, we assume a ``rigid rotor'' model~\cite{nagnibeda2009non}, that is, that the rotational and vibrational degrees of freedom are uncoupled. We also assume that the rotational mode is fully excited, leading to $\varepsilon_{\mathrm{rot}}=kT/m$. As a consequence, the specific heat of the rotational degrees of freedom simplifies to $c_{v,\mathrm{rot}}=k/m$. For a more detailed discussion of the validity of these assumptions, at least with regard to viscous fluid properties, the reader is referred to~\cite{kustova2017applicability}.

For the vibrational degrees of freedom, we consider the two following models:
\begin{enumerate}
    \item Infinite harmonic oscillator: 
    \begin{equation}
        \varepsilon_{\mathrm{vibr}}=\frac{k}{m}\frac{\theta_v}{\exp\left(\frac{\theta_v}{T}\right)-1}, \quad c_{v,\mathrm{vibr}}=\frac{k}{m}\frac{\theta_v^2 \exp\left(\frac{\theta_v}{T}\right)}{T^2\left(\exp\left(\frac{\theta_v}{T}\right)-1\right)^2},\label{eq:iho}
    \end{equation}
    where $\theta_v$ is the characteristic vibrational temperature of the molecules.
    \item Cut-off oscillator:
    \begin{equation*}
        \varepsilon_{\mathrm{vibr}}=\frac{1}{mZ_{\mathrm{vibr}}}\sum_{i=0}^{i_{max}} \varepsilon_{v,i}\exp\left(-\frac{\varepsilon_{v,i}}{kT}\right), \quad Z_{\mathrm{vibr}}=\sum_{i=0}^{i_{max}}\exp\left(-\frac{\varepsilon_{v,i}}{kT}\right),\label{eq:cutoffosc}
    \end{equation*}
    \begin{equation}
        c_{v,\mathrm{vibr}}=\frac{1}{mZ_{\mathrm{vibr}}}\frac{1}{kT^2}\left(\sum_{i=0}^{i_{max}}\varepsilon_{v,i}^2\exp\left(-\frac{\varepsilon_{v,i}}{kT}\right) - \left(\sum_{i=0}^{i_{max}}\varepsilon_{v,i}\exp\left(-\frac{\varepsilon_{v,i}}{kT}\right)\right)^2 \right).
    \end{equation}
    Here $Z_{\mathrm{vibr}}$ is the vibrational partition function, the summation over $i$ denotes the summation over all the allowed discrete vibrational states, $\varepsilon_{v,i}$ is the vibrational energy of the molecule in vibrational state $i$, and $i_{max}$ is the maximum allowed vibrational state. Usually this is chosen based on the dissociation energy $D$ of the molecule, such that $\varepsilon_{v,i_{max}}<D<\varepsilon_{v,i_{max}+1}$.

    In the case of $\varepsilon_{v,i} = \left(i + \frac{1}{2}\right) k \theta_v$, where $\theta_v$ is the characteristic vibrational temperature of the molecular species, one obtains the \textit{cut-off harmonic oscillator model}. In case a non-linear dependence of the vibrational energy on the level number is assumed, one obtains a so-called \textit{cut-off anharmonic oscillator model}. In the present work, second-order terms are considered in the case of the anharmonic oscillator model: $\varepsilon_{v,i} = \left(i + \frac{1}{2}\right) k \theta_v - \left(i + \frac{1}{2}\right)^2 k \theta_{v,\mathrm{anh}}$, where $\theta_{v,\mathrm{anh}}$ is the characteristic temperature of the anharmonic correction.

\end{enumerate}

\begin{figure}[h]
    \centering
    \includegraphics[width=0.7\textwidth]{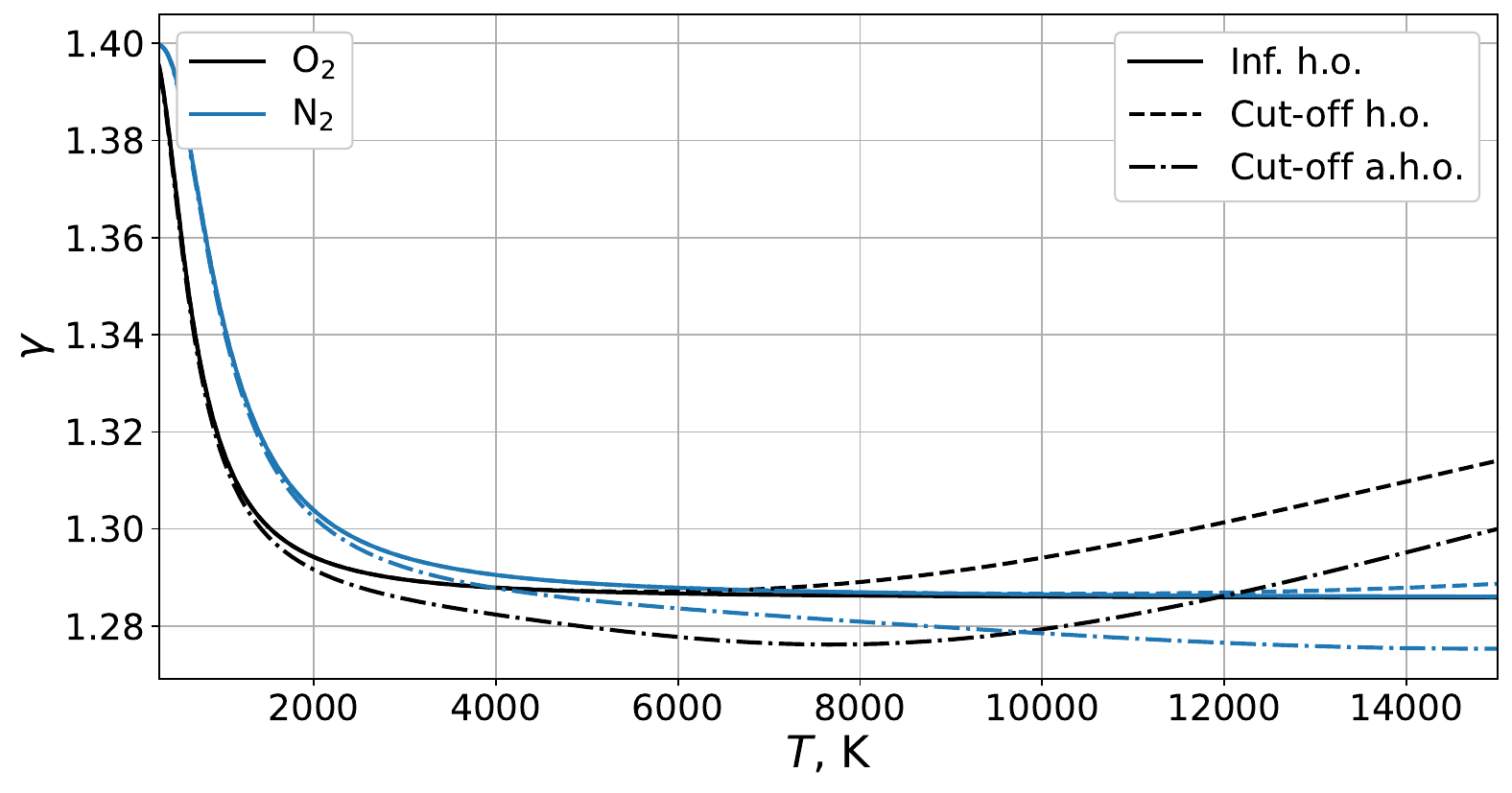}
    \caption{The heat capacity ratio $\gamma$ as a function of temperature for different models of the vibrational energy spectrum of oxygen (black lines) and nitrogen (blue lines).}
    \label{fig:gamma}
\end{figure}

In the present work we consider two molecular species: O$_2$ and N$_2$. For O$_2$, the following values were used to determine the vibrational spectrum~\cite{capitelli2005tables}: a dissociation energy $D$ of 59364~K, a characteristic vibrational temperature $\theta_v$ of 2273.5~K, and a characteristic temperature of the anharmonic correction $\theta_{v,\mathrm{anh}}$ of $17.366$~K. For N$_2$, the following values were used: $D=113252$~K, $\theta_v=3393.48$~K, $\theta_{v,\mathrm{anh}}$ of $20.603$~K. 

Figure~\ref{fig:gamma} shows the impact the choice of the vibrational energy spectrum description has on the thermodynamic properties of the gas (in this case, the heat capacity ratio $\gamma$). Three different models of increasing complexity are considered: the infinite harmonic oscillator, the cut-off harmonic oscillator, and the cut-off anharmonic oscillator. For both of the cut-off models, the cut-off was determined by the dissociation energy $D$. 
It can be seen that the heat capacity ratio strongly deviates from the room temperature value of 1.4, with all the cut-off models exhibiting non-monotone behaviour. For nitrogen, the effects of vibrational excitation and vibrational spectrum cut-off appear at higher temperatures than for oxygen, due to its higher characteristic vibrational temperature and larger dissociation energy. Due to the larger dissociation energy of nitrogen, the difference between the infinite and cut-off harmonic oscillator models is also not significant in the temperature range considered, appearing only for $T>12000$~K, whereas for oxygen the difference becomes noticeable already for temperatures above 6000~K. The impact of anharmonicity is more pronounced for both molecules.  It can thus be concluded that whilst the infinite harmonic oscillator model does allow for simulation of more complex thermodynamical phenomena compared to a calorically perfect gas model, it is noticeably less accurate than the more detailed cut-off oscillator models for high temperatures.
Note that the presented results do not account for effects such as electronic excitation, which further complicates the thermodynamic behaviour of gases at temperatures higher than 7000-10000~K~\cite{mcbride2002nasa,capitelli2005tables}.

\subsection{Implementation of internal energy models in a CFD solver}
Accounting for internal energy effects in a flow solver involves implementing several  functions: 1) one to compute the flow temperature $T$ from the conservative variables, 2) one to compute a specific internal energy given the flow temperature $T$, 3) one to compute the specific heats given the flow temperature $T$, and 4) one to compute the flow entropy given the temperature and conservative variables. 

In general, no closed-form expression exists to compute $T$ from $\varepsilon_{\mathrm{int}}$. As such, existing packages for high-temperature flow simulation, such as Mutation++~\cite{scoggins2020mutation}, use a non-linear solver to obtain the flow temperature. 
In the present work, a single-species gas is considered, and an approach based on linear interpolation between tabulated values is used due to the lower computational cost.

First, reasonable minimum and maximum values of $T$ are chosen, e.g., $T_{\mathrm{min}}=10$~K, $T_{\mathrm{max}}=50000$~K, as well as a temperature discretization step $\Delta T$. The full specific internal energy $\varepsilon_{\mathrm{int}}=\varepsilon_{\mathrm{int}}(T)$ is computed over the specified temperature range with the corresponding step in $\Delta T$ and stored, as well as the total specific heat $c_{v}=c_{v}(T)$. This allows for efficient computation of temperature, a primitive variable, from energy, a conservative variable, as the Newton solver used to solve for temperature can use the linear interpolation for $c_v$ and $\varepsilon_{\mathrm{int}}$ instead of the potentially computationally more expensive analytical expressions.

\section{Entropy-conservative flux}
Having defined our system of flow equations and its closures, we are interested in developing an entropy-conservative flux function. In this, we follow the methodology of Tadmor~\cite{tadmor2003entropy}, with the help of the more generic formulation of the entropy as given in~\cite{peyvan2023high}.

\subsection{Entropy and entropy variables}
The physical entropy $s$ is defined as the specific quantity
\begin{equation}
    s = \int_0^T \frac{c_v(\tau)}{\tau}\mathrm{d}\tau - \frac{k}{m} \ln \rho = \eta(T) -  \frac{k}{m} \ln \rho. \label{eq:entropy-def}
\end{equation}
We will hereafter refer to the $\int_0^T \frac{c_v(\tau)}{\tau}\mathrm{d}\tau $ as the ``integral part'' of the entropy. As it will play a central role in the following derivations, we denote it by $\eta(T)$ to simplify the notation. We also define the mathematical entropy as the volume density $\mathfrak{s}=-\rho s$. 

The specific entropy satisfies the entropy balance equation~\cite{harten1998convex}
\begin{equation}
    \frac{\partial s}{\partial t} + \mathbf{v} \cdot \nabla s = \sigma \geq 0,
\end{equation}
where $\mathbf{v}=(v_x, v_x)$ is the velocity vector, and $\sigma$ is the entropy production. The entropy production $\sigma$ depends on the physics of the problem being considered, i.e. presence of shockwaves, chemical reactions, etc., and we thus leave it unspecified.

In order to derive an entropy-conservative flux, we first need to compute entropy variables, defined by
\begin{equation}
    \bm{\omega}=\frac{\partial \mathfrak{s}}{\partial \mathbf{u}}.
\end{equation}
As before, our vector of conservative variables is the following
\begin{equation}
    \mathbf{u} = \left(\rho, \rho v_x, \rho v_y, E\right)^{\mathrm{T}} = \left(\rho, \rho v_x, \rho v_y, \rho \varepsilon_{\mathrm{int}} + \frac{1}{2}\rho v^2 \right)^{\mathrm{T}},
\end{equation}
and we define our vector of primitive variables as
\begin{equation}
    \mathbf{z} = \left(\rho, v_x, v_y, T\right)^{\mathrm{T}}.
\end{equation}
Using the chain rule, we can write the entropy variable vector~\cite{harten1998convex} 
\begin{equation}
    \bm{\omega}=\frac{\partial \mathfrak{s}}{\partial \mathbf{u}}=\frac{\partial \mathfrak{s}}{\partial \mathbf{z}}\left(\frac{\partial \mathbf{u}}{\partial \mathbf{z}}\right)^{-1}.\label{eq:omega-chainrule}
\end{equation}
Computing the derivatives of the conservative variables w.r.t. the primitive variables is straightforward
\begin{equation}
    \frac{\partial \mathbf{u}}{\partial \mathbf{z}} = 
    \begin{pmatrix}
    1 & 0 & 0 & 0\\
    v_x & \rho & 0 & 0\\
    v_y & 0 & \rho & 0\\
    \varepsilon_{\mathrm{int}} + \frac{1}{2} v^2 & \rho v_x & \rho v_y & \rho c_v
    \end{pmatrix}
\end{equation}
Inverting, we obtain
\begin{equation}
    \left(\frac{\partial \mathbf{u}}{\partial \mathbf{z}}\right)^{-1} = 
    \begin{pmatrix}
    1 & 0 & 0 & 0\\
    -\frac{v_x}{\rho} & \frac{1}{\rho} & 0 & 0\\
    -\frac{v_y}{\rho} & 0 & \frac{1}{\rho} & 0\\
    \frac{\frac{1}{2}v^2 - \varepsilon_{\mathrm{int}}}{\rho c_v} & -\frac{v_x}{\rho c_v} & -\frac{v_y}{\rho c_v} & \frac{1}{\rho c_v}
    \end{pmatrix}
\end{equation}
Making use of the Leibniz integral rule, we get
\begin{equation}
    \frac{\partial \mathfrak{s}}{\partial \mathbf{z}} = \left( - s + \frac{k}{m}, 0, 0, -\rho\frac{c_v}{T} \right).
\end{equation}
Applying (\ref{eq:omega-chainrule}) and omitting any constant terms independent of the flow variables, we obtain the vector of entropy variables.
\begin{equation}
    \bm{\omega} =  \left(\omega_1, \omega_2, \omega_3, \omega_4 \right) = \left(-s(T) + \frac{\varepsilon_{\mathrm{int}}(T)-\frac{1}{2}v^2}{T}, \frac{v_x}{T}, \frac{v_y}{T}, -\frac{1}{T} \right)^{\mathrm{T}}. \label{eq:omega-vector}
\end{equation}

\subsection{Derivation of entropy-conservative flux}
The equation for entropy-conserving flux $\mathbf{F}^{\mathrm{num},EC,j} \in \mathbb{R}^4$ in spatial direction $j$ ($j=x,y$) is given by the following compatibility condition~\cite{tadmor2003entropy}:
\begin{equation}
    \llbracket \bm{\omega}^{\mathrm{T}} \rrbracket \cdot \mathbf{F}^{\mathrm{num},EC,j} - \llbracket \psi_j \rrbracket = 0,\label{eqn:ec-fluxes}
\end{equation}
where $\llbracket a \rrbracket$ = $a_+ - a_-$ is the jump in the value of a flow variable $a$, and $a_-$ and $a_+$ denote the values of a flow variable $a$ at two points between which the flux is computed, that is, the flux $\mathbf{F}^{\mathrm{num},EC,j}$ depends on both flow states $\mathbf{u}_{\pm}$.  The flux potentials are given by~\cite{peyvan2023high}:
\begin{equation}
    \psi_j = \frac{k}{m}\rho v_j.
\end{equation}
To obtain the flux $\mathbf{F}^{\mathrm{num},EC,j}$, we need to write $\llbracket \omega \rrbracket$ and $\llbracket \psi_j \rrbracket$ in terms of jumps our preferred variables (in this case, the primitive variables $\mathbf{z}$)
\begin{equation}
    \llbracket \bm{\omega} \rrbracket = A\left(\mathbf{z}_-, \mathbf{z}_+\right) \cdot \llbracket \mathbf{z} \rrbracket,\:A\left(\mathbf{z}_-, \mathbf{z}_+\right) \in \mathbb{R}^{4 \times 4},
\end{equation}
\begin{equation}
    \llbracket \psi_j \rrbracket = B\left(\mathbf{z}_-, \mathbf{z}_+\right) \cdot \llbracket \mathbf{z} \rrbracket,\:B\left(\mathbf{z}_-, \mathbf{z}_+\right) \in \mathbb{R}^{1 \times 4}.
\end{equation}
Here $A\left(\mathbf{z}_-, \mathbf{z}_+\right)$ and $B\left(\mathbf{z}_-, \mathbf{z}_+\right)$ are matrices dependent on the values of the primitive variables at the two points between which the flux is computed.
The compatibility condition~(\ref{eqn:ec-fluxes}) can be rewritten as
\begin{equation}
     \left(\left(\mathbf{F}^{\mathrm{num},EC,j}\right)^{\mathrm{T}} \cdot  A\left(\mathbf{z}_-, \mathbf{z}_+\right)  - B\left(\mathbf{z}_-, \mathbf{z}_+\right) \right)  \cdot \llbracket \mathbf{z} \rrbracket = 0,\:\forall\:  \llbracket \mathbf{z} \rrbracket \in \mathbb{R}^4.\label{eqn:ec-fluxes-matrix}
\end{equation}
Once $A\left(\mathbf{z}_-, \mathbf{z}_+\right)$ and $B\left(\mathbf{z}_-, \mathbf{z}_+\right)$ have been found, $\mathbf{F}^{\mathrm{num},EC,j}$ can be computed as
\begin{equation}
    \mathbf{F}^{\mathrm{num},EC,j} = \left(B\left(\mathbf{z}_-, \mathbf{z}_+\right) A^{-1}\left(\mathbf{z}_-, \mathbf{z}_+\right) \right)^{\mathrm{T}}.
\end{equation}

In order to obtain rewrite the jumps in the entropy variables in terms of jumps in the primitive variables, we first need to introduce the following averaging operators~\cite{ismail2009affordable,ranocha2018comparison}
\begin{equation}
    \left\{\!\left\{ a \right\}\!\right\} = \frac{1}{2}\left(a_- + a_+\right),\label{eq:avg_mean}
\end{equation}
\begin{equation}
    \left\{\!\left\{ a \right\}\!\right\}_{\mathrm{geo}} = \sqrt{a_- a_+},\label{eq:avg_geo}
\end{equation}
\begin{equation}
    \left\{\!\left\{ a \right\}\!\right\}_{\mathrm{log}} = \frac{a_+ - a_-}{\log a_+ - \log a_-}.\label{eq:avg_log}
\end{equation}
Then, the following relations hold:
\begin{equation}
    \llbracket ab \rrbracket =  \left\{\!\left\{ a \right\}\!\right\} \llbracket b \rrbracket + \left\{\!\left\{ b \right\}\!\right\} \llbracket a \rrbracket,
\end{equation}
\begin{equation}
    \left \llbracket \frac{a}{b} \right\rrbracket = -\frac{\left\{\!\left\{ a \right\}\!\right\}} {\left\{\!\left\{ b \right\}\!\right\}_{\mathrm{geo}}^2} \llbracket b \rrbracket +  \left\{\!\left\{ \frac{1}{b} \right\}\!\right\} \llbracket a \rrbracket,
\end{equation}
\begin{equation}
    \llbracket \log a \rrbracket =  \frac{\llbracket a \rrbracket} {\left\{\!\left\{ a \right\}\!\right\}_{\mathrm{log}}} .
\end{equation}


Let us write out the jumps for $\omega_2$, $\omega_3$, $\omega_4$ (as defined by~(\ref{eq:omega-vector})), as their derivation is straightforward and does not require any additional discussion. We find
\begin{equation}
    \left \llbracket \omega_2 \right\rrbracket =
    \left\{\!\left\{ v_x \right\}\!\right\}  \left \llbracket \frac{1}{T} \right\rrbracket + \left\{\!\left\{ \frac{1}{T} \right\}\!\right\} \left \llbracket v_x \right\rrbracket =
    -\left\{\!\left\{ v_x \right\}\!\right\} \frac{1}{{\left\{\!\left\{ T \right\}\!\right\}_{\mathrm{geo}}^2}}  \left \llbracket T \right\rrbracket + \left\{\!\left\{ \frac{1}{T} \right\}\!\right\} \left \llbracket v_x \right\rrbracket.
\end{equation}
By analogy we obtain
\begin{equation}
    \left \llbracket \omega_3 \right\rrbracket =
    -\left\{\!\left\{ v_y \right\}\!\right\} \frac{1}{{\left\{\!\left\{ T \right\}\!\right\}_{\mathrm{geo}}^2}}  \left \llbracket T \right\rrbracket + \left\{\!\left\{ \frac{1}{T} \right\}\!\right\} \left \llbracket v_y \right\rrbracket.
\end{equation}
Finally, for $\left\llbracket \omega_4 \right\rrbracket$ we get
\begin{equation}
    \left \llbracket \omega_4 \right\rrbracket =
    \frac{1}{{\left\{\!\left\{ T \right\}\!\right\}_{\mathrm{geo}}^2}}  \left \llbracket T \right\rrbracket.
\end{equation}

Let us now focus the attention on $\omega_1$, taking into account the definition of entropy~(\ref{eq:entropy-def}).
\begin{align}
    \left \llbracket \omega_1 \right\rrbracket & = -\left \llbracket s \right\rrbracket + \left \llbracket \frac{\varepsilon_{\mathrm{int}}}{T} \right\rrbracket - \left \llbracket \frac{v_x^2 + v_y^2}{2T} \right\rrbracket \nonumber \\
    & = -\left \llbracket s \right\rrbracket - \frac{\left\{\!\left\{ \varepsilon_{\mathrm{int}} \right\}\!\right\}} {\left\{\!\left\{ T \right\}\!\right\}_{\mathrm{geo}}^2} \llbracket T \rrbracket + \left\{\!\left\{ \frac{1}{T} \right\}\!\right\} \left \llbracket \varepsilon_{\mathrm{int}} \right\rrbracket + \frac{\left\{\!\left\{ v_x^2 \right\}\!\right\} +\left\{\!\left\{ v_y^2 \right\}\!\right\}} {2 \left\{\!\left\{ T \right\}\!\right\}_{\mathrm{geo}}^2} \llbracket T \rrbracket \nonumber \\
    &- \left\{\!\left\{ \frac{1}{T} \right\}\!\right\} \left\{\!\left\{ v_x \right\}\!\right\}  \llbracket v_x \rrbracket - \left\{\!\left\{ \frac{1}{T} \right\}\!\right\} \left\{\!\left\{ v_y \right\}\!\right\}  \llbracket v_y \rrbracket \nonumber \\
    &= - \left\llbracket \eta(T) \right\rrbracket  + 
    \frac{k}{m} \frac{\llbracket \rho \rrbracket} {\left\{\!\left\{ \rho \right\}\!\right\}_{\mathrm{log}}} - \frac{\left\{\!\left\{ \varepsilon_{\mathrm{int}} \right\}\!\right\}} {\left\{\!\left\{ T \right\}\!\right\}_{\mathrm{geo}}^2} \llbracket T \rrbracket + \left\{\!\left\{ \frac{1}{T} \right\}\!\right\} \left \llbracket \varepsilon_{\mathrm{int}} \right\rrbracket \nonumber\\
    &+ \frac{\left\{\!\left\{ v_x^2 \right\}\!\right\} +\left\{\!\left\{ v_y^2 \right\}\!\right\}} {2 \left\{\!\left\{ T \right\}\!\right\}_{\mathrm{geo}}^2} \llbracket T \rrbracket - \left\{\!\left\{ \frac{1}{T} \right\}\!\right\} \left\{\!\left\{ v_x \right\}\!\right\}  \llbracket v_x \rrbracket - \left\{\!\left\{ \frac{1}{T} \right\}\!\right\} \left\{\!\left\{ v_y \right\}\!\right\}  \llbracket v_y \rrbracket.\label{eq:jump-omega1}
\end{align}
All that is left is to express the jumps in $\eta(T)$ and $\varepsilon_{\mathrm{int}}(T)$ in terms of jumps in $T$. At first this seems impossible unless some specific expressions for $c_v(T)$ and $\varepsilon_{\mathrm{int}}$ are provided. Even then, a closed-form solution is not necessarily available. For the simplest case of an infinite harmonic oscillator~\cite{peyvan2023high}, the resulting flux formulae are quite complex and involve hyperbolic functions. Another approach is to use the NASA polynomials~\cite{mcbride2002nasa}, which approximate specific heats as polynomials in $T$ and $1/T$. The polynomial fits for entropies and enthalpies are obtained via simple analytical integration. One can plug in the polynomial approximations into~(\ref{eq:jump-omega1}) and obtain a closed-form expression for the jump in $\omega_1$ using the relevant chain rules, which is the procedure employed in~\cite{gouasmi2020formulation}. The drawback of the approach is that for each internal energy model, a new set of fits has to be performed as the coefficients given by NASA take into account effects such as anharmonicity, electronic excitation, etc.  Moreover, in thermally non-equilibrium multi-temperature flows, additional fits for the vibrational components of the internal energies need to be provided.

However, we can formally write out the jumps in $\eta(t)$ and $\varepsilon_{\mathrm{int}}$ using the mean value theorem:
\begin{equation}
   \left \llbracket \eta(T) \right \rrbracket = \int_0^{T_+} \frac{c_v(\tau)}{\tau}\mathrm{d}\tau - \int_0^{T_-} \frac{c_v(\tau)}{\tau}\mathrm{d}\tau = \int_{T_-}^{T_+} \frac{c_v(\tau)}{\tau}\mathrm{d}\tau = \llbracket T \rrbracket \frac{c_v\left(T^\ast\right)}{T^\ast},\label{eq:jump-s}
\end{equation}
\begin{equation}
   \left \llbracket \varepsilon_{\mathrm{int}} \right \rrbracket = \int_0^{T_+} c_v(\tau)\mathrm{d}\tau - \int_0^{T_-} c_v(\tau)\mathrm{d}\tau = \int_{T_-}^{T_+} c_v(\tau) \mathrm{d}\tau = \llbracket T \rrbracket c_v\left(T^{\ast\ast}\right).\label{eq:jump-eint}
\end{equation}
Here $T^\ast$, $T^{\ast\ast}$ are some values of the temperature in the interval $\left [\mathrm{min}\left(T_-, T_+\right), \mathrm{max}\left(T_-, T_+\right) \right]$. Based on this formalism, we can simply define the following quantities:
\begin{equation}
    \frac{c_v\left(T^\ast\right)}{T^\ast} \coloneq \frac{\left \llbracket \eta(T) \right \rrbracket}{\llbracket T \rrbracket},
\end{equation}
\begin{equation}
    c_v\left(T^{\ast\ast}\right) \coloneq \frac{\left \llbracket \varepsilon_{\mathrm{int}} \right \rrbracket}{\llbracket T \rrbracket}.
\end{equation}


With that, we can write out the final expression for $\left \llbracket \omega_1 \right\rrbracket$:
\begin{align}
    \left \llbracket \omega_1 \right\rrbracket & = - \llbracket T \rrbracket \frac{c_v\left(T^\ast\right)}{T^\ast}  + 
    \frac{k}{m} \frac{\llbracket \rho \rrbracket} {\left\{\!\left\{ \rho \right\}\!\right\}_{\mathrm{log}}} - \frac{\left\{\!\left\{ \varepsilon_{\mathrm{int}} \right\}\!\right\}} {\left\{\!\left\{ T \right\}\!\right\}_{\mathrm{geo}}^2} \llbracket T \rrbracket + \left\{\!\left\{ \frac{1}{T} \right\}\!\right\} c_v\left(T^{\ast\ast}\right)\llbracket T \rrbracket  \nonumber\\
    &+ \frac{\left\{\!\left\{ v_x^2 \right\}\!\right\} +\left\{\!\left\{ v_y^2 \right\}\!\right\}} {2 \left\{\!\left\{ T \right\}\!\right\}_{\mathrm{geo}}^2} \llbracket T \rrbracket - \left\{\!\left\{ \frac{1}{T} \right\}\!\right\} \left\{\!\left\{ v_x \right\}\!\right\}  \llbracket v_x \rrbracket - \left\{\!\left\{ \frac{1}{T} \right\}\!\right\} \left\{\!\left\{ v_y \right\}\!\right\}  \llbracket v_y \rrbracket.\label{eq:jump-omega1-final}
\end{align}

Substituting into~(\ref{eqn:ec-fluxes}) and solving for the flux components, we obtain two entropy-conserving flux vectors (in the $x$ and $y$ spatial directions) for the conservative variables:
\begin{align}
    F^{\mathrm{num},x}_{\rho} &  = {\left\{\!\left\{ \rho \right\}\!\right\}_{\mathrm{log}}}\left\{\!\left\{ v_x \right\}\!\right\},\label{eq:flux_fx_rho} \\
    F^{\mathrm{num},x}_{\rho v_x} &  =  F^{\mathrm{num},x}_{\rho} \left\{\!\left\{ v_x \right\}\!\right\}  + \frac{k}{m}\frac{ \left\{\!\left\{ \rho \right\}\!\right\}}{ \left\{\!\left\{ 1/T \right\}\!\right\}}, \\
    F^{\mathrm{num},x}_{\rho v_y} & = F^{\mathrm{num},x}_{\rho} \left\{\!\left\{ v_y \right\}\!\right\}, \\
    F^{\mathrm{num},x}_{E} & = F^{\mathrm{num},x}_{\rho} \left\{\!\left\{ T \right\}\!\right\}_{\mathrm{geo}}^2 \left(\frac{c_v\left(T^\ast\right)}{T^\ast} -\left\{\!\left\{ \frac{1}{T} \right\}\!\right\} c_v\left(T^{\ast\ast}\right) \right) \nonumber \\
    & + F^{\mathrm{num},x}_{\rho}  \left(\left\{\!\left\{ \varepsilon_{\mathrm{int}} \right\}\!\right\} - \frac{\left\{\!\left\{ v_x^2 \right\}\!\right\} +\left\{\!\left\{ v_y^2 \right\}\!\right\}} {2}\right) \nonumber \\
    & + \left\{\!\left\{ v_x \right\}\!\right\} F^{\mathrm{num},x}_{\rho v_x} +  \left\{\!\left\{ v_y \right\}\!\right\}  F^{\mathrm{num},x}_{\rho v_y}.\label{eq:flux-E_x}
\end{align}

In the $y$ direction:
\begin{align}
    F^{\mathrm{num},y}_{\rho} & = {\left\{\!\left\{ \rho \right\}\!\right\}_{\mathrm{log}}}\left\{\!\left\{ v_y \right\}\!\right\},\\
    F^{\mathrm{num},y}_{\rho v_x} & =  F^{\mathrm{num},y}_{\rho} \left\{\!\left\{ v_x \right\}\!\right\},\\
    F^{\mathrm{num},y}_{\rho v_y} & = F^{\mathrm{num},y}_{\rho} \left\{\!\left\{ v_y \right\}\!\right\}  + \frac{k}{m} \frac{ \left\{\!\left\{ \rho \right\}\!\right\}}{ \left\{\!\left\{ 1/T \right\}\!\right\}},\\
    F^{\mathrm{num},y}_{E} & = F^{\mathrm{num},y}_{\rho} \left\{\!\left\{ T \right\}\!\right\}_{\mathrm{geo}}^2 \left(\frac{c_v\left(T^\ast\right)}{T^\ast} -\left\{\!\left\{ \frac{1}{T} \right\}\!\right\} c_v\left(T^{\ast\ast}\right) \right) \nonumber \\
    & + F^{\mathrm{num},y}_{\rho}  \left(\left\{\!\left\{ \varepsilon_{\mathrm{int}} \right\}\!\right\} - \frac{\left\{\!\left\{ v_x^2 \right\}\!\right\} +\left\{\!\left\{ v_y^2 \right\}\!\right\}} {2}\right) \nonumber \\
    & + \left\{\!\left\{ v_x \right\}\!\right\} F^{\mathrm{num},y}_{\rho v_x} +  \left\{\!\left\{ v_y \right\}\!\right\}  F^{\mathrm{num},y}_{\rho v_y}.\label{eq:flux-ec-energy}
\end{align}

The only remaining part is the computation of $\frac{c_v\left(T^\ast\right)}{T^\ast}$, $c_v\left(T^{\ast\ast}\right)$. It should be stressed that up to this point, no quantities have been approximated.
For practical purposes, however, one requires some numerical procedures to compute $\llbracket T \rrbracket$ and $\left \llbracket \eta(T) \right \rrbracket$. For example, given an analytical expression for $\varepsilon_{\mathrm{int}}(T)$ (e.g. one incorporating the infinite harmonic oscillator model~(\ref{eq:iho}) or cut-off oscillator model~(\ref{eq:cutoffosc}) for the internal energy), one can use Newton's method to compute $T_-$, $T_+$ from the given values of $\varepsilon_{\mathrm{int},-}$, $\varepsilon_{\mathrm{int},+}$ to a desired degree of precision. From the computed temperature values, one can then evaluate $\eta(T_-)$ and $\eta(T_+)$ numerically using some quadrature rule. The disadvantage of such an approach is the potentially high computational cost, as each flux evaluation requires multiple evaluations of $\varepsilon_{\mathrm{int}}(T)$ and $c_v(T)$ in the Newton solver iterations and the numerical integration for $\eta(T)$.

In the present work, we propose the following procedure. Values of $\varepsilon_{\mathrm{int}}$ and $c_v(T)$ are pre-computed and tabulated at the start of a simulation over a range of temperatures with a uniform step size of $\Delta T$. During the course of a simulation, piece-wise linear interpolation is then used to compute the values of $\varepsilon_{\mathrm{int}}$, $c_v(T)$ where required. The piece-wise linear interpolation is of $c_v(T)$ is used to compute $\eta(T)$ exactly. We can write
\begin{equation}
    \eta(T) = \sum_{i=0}^{N-1} \eta_i + \left(c_{v,N} - \frac{(c_v(T) - c_{v,N}) T_{N}}{\Delta T}\right) \ln\left(\frac{T}{T_{N}}\right) + \left(c_v(T) - c_{v,N}\right).\label{eq:eta_compute}
\end{equation}
Here $c_{v,i}$ denotes a tabulated value of $c_v(T)$ computed at a tabulated temperature of $T_i$, $c_v(T)$ is a linearly interpolated value of $c_v$ at an arbitrary temperature. $N$ is defined as $\lfloor T - T_{\mathrm{min}} / \Delta T \rfloor$, where $T_{min}$ is the minimum temperature used for pre-computing the values. The values $\eta_i$ are the integrals of $c_v(T)/T$ computed over the $\Delta T$-sized intervals assuming a linear interpolation of $c_v(T)$ over the interval:
\begin{equation}
    \eta_i = \left(c_{v,i} - \frac{(c_{v,i+1} - c_{v,i}) T_{i}}{\Delta T}\right) \ln\left(\frac{T_{i+1}}{T_{i}}\right) + \left(c_{v,i+1} - c_{v,i}\right).
\end{equation}

To reduce the computational cost of computing $\eta(T)$ during a simulation, the values of $\sum_{i=0}^{N} \eta_i$ can also be tabulated for $N=0,\ldots N_{\mathrm{max}}$, and the computation of $\eta(T)$ is then reduced to getting a value from a look-up table and evaluating one logarithm. The proposed algorithm not only provides a direct numerical procedure for evaluation of the entropy-conservative flux presented above for a gas with an arbitrary equation for the internal energy distribution function, but also potentially leads to a reduction in the computational cost (as potentially expensive on-the-fly computations are replaced by linear interpolation). In principle, other higher-order interpolation methods can be used, as the developed flux formulation is independent of the numerical procedures used to estimate $\frac{c_v\left(T^\ast\right)}{T^\ast}$, $c_v\left(T^{\ast\ast}\right)$. The error due to the interpolation of energy and specific heats will be analyzed in the next section, and in the numerical results section it will be shown that with a reasonable choice of $\Delta T$ the error in the entropy production rate can be made to be on the order of machine precision, thus guaranteeing numerical entropy conservation.

This concludes the derivation of the entropy-conservative numerical flux for the case of a single-species gas with arbitrary internal energies.

\subsection{Properties of the entropy-conservative flux}
We now investigate and prove several properties of the flux~(\ref{eq:flux_fx_rho})--(\ref{eq:flux-ec-energy}).

\begin{theorem}
The numerical flux given by Eqns.~(\ref{eq:flux_fx_rho})--(\ref{eq:flux-ec-energy}) is consistent if $\lim_{\substack{T_- \to T \\ T_+ \to T}} {c_v\left(T^\ast\right)}/{T^\ast} = {c_v(T)}/{T}$
and $\lim_{\substack{T_- \to T \\ T_+ \to T}}c_v\left(T^{\ast\ast}\right) = c_v(T)$.\label{theorem:consistency}
\end{theorem}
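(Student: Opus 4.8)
The relevant notion of consistency for a two-point numerical flux is that $\mathbf{F}^{\mathrm{num},j}(\mathbf{z},\mathbf{z})=\mathbf{f}_j(\mathbf{z})$, i.e.\ when the two states coincide, $\mathbf{z}_-=\mathbf{z}_+=\mathbf{z}=(\rho,v_x,v_y,T)^{\mathrm T}$, the numerical flux collapses onto the physical flux~(\ref{eqns:euler-flux_x})--(\ref{eqns:euler-flux_y}). The plan is therefore simply to pass to this limit in each of the four components of~(\ref{eq:flux_fx_rho})--(\ref{eq:flux-E_x}), and then to invoke the obvious $x\leftrightarrow y$ symmetry for~(\ref{eq:flux-ec-energy}). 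The only ingredients I will need are: (i) each averaging operator is a genuine mean, so $\{\!\{a\}\!\}\to a$ and $\{\!\{a\}\!\}_{\mathrm{geo}}=\sqrt{a_-a_+}\to a$ immediately, while $\{\!\{a\}\!\}_{\mathrm{log}}\to a$ follows from the standard limit for the logarithmic mean (L'Hôpital on $(a_+-a_-)/(\log a_+-\log a_-)$, or a short Taylor expansion about $a_-=a_+$); and (ii) the two hypotheses of the statement, $c_v(T^\ast)/T^\ast\to c_v(T)/T$ and $c_v(T^{\ast\ast})\to c_v(T)$.

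\textbf{Density and momentum components.} First, $F^{\mathrm{num},x}_{\rho}=\{\!\{\rho\}\!\}_{\mathrm{log}}\{\!\{v_x\}\!\}\to\rho v_x$, matching the first component of $\mathbf{f}_x$. Next, $F^{\mathrm{num},x}_{\rho v_x}\to\rho v_x\cdot v_x+\tfrac{k}{m}\,\rho/(1/T)=\rho v_x^2+\tfrac{k}{m}\rho T=\rho v_x^2+p$, where I use the ideal gas law $p=nkT=\rho kT/m$; this is the second component. Similarly $F^{\mathrm{num},x}_{\rho v_y}\to\rho v_x v_y$, the third component. All of this is routine bookkeeping with the averages.

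\textbf{Energy component and the role of the hypotheses.} The only step requiring care is the energy flux~(\ref{eq:flux-E_x}). Its leading group carries the factor $F^{\mathrm{num},x}_{\rho}\{\!\{T\}\!\}_{\mathrm{geo}}^2\big(c_v(T^\ast)/T^\ast-\{\!\{1/T\}\!\}c_v(T^{\ast\ast})\big)$, and by hypothesis (ii) the bracket tends to $c_v(T)/T-c_v(T)/T=0$, so this group disappears in the limit; this is precisely where --- and the only place where --- the assumptions on the interpolated specific heats are used, so I expect it to be the crux of the argument (everything else being automatic). The surviving terms tend to $\rho v_x\big(\varepsilon_{\mathrm{int}}-\tfrac12 v^2\big)+v_x(\rho v_x^2+p)+v_y\,\rho v_x v_y$ with $v^2=v_x^2+v_y^2$, and a one-line regrouping collapses this to $v_x\big(\rho\varepsilon_{\mathrm{int}}+\tfrac12\rho v^2+p\big)=(E+p)v_x$, the fourth component of $\mathbf{f}_x$. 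The $y$-fluxes are identical after swapping $x$ and $y$, which finishes the plan. I would close by remarking that an interpolation rule for the specific heats violating hypothesis (ii) would leave a nonzero residual in the energy flux, so consistency of $\mathbf{F}^{\mathrm{num},j}$ is inherited directly from consistency of the tabulation/interpolation used to form $c_v(T^\ast)/T^\ast$ and $c_v(T^{\ast\ast})$.
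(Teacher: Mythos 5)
Your proposal is correct and follows essentially the same route as the paper's proof: consistency of the averaging operators handles the density and momentum components, the two hypotheses force the bracket $c_v(T^\ast)/T^\ast-\{\!\{1/T\}\!\}c_v(T^{\ast\ast})$ to vanish in the coincident-state limit, and the surviving energy terms regroup to $(E+p)v_x$, with the $y$-direction by symmetry. You correctly identify the hypotheses as the sole nontrivial ingredient, exactly as in the paper.
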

\begin{proof}
We prove consistency of the flux in the $x$ spatial direction, as consistency of the flux in the $y$ direction can be proved by analogy.
We need to show that $\mathbf{F}^{\mathrm{num},x}\left(\mathbf{u},\mathbf{u}\right)=\mathbf{f}_x(\mathbf{u})$.

The averaging operators~(\ref{eq:avg_mean})--(\ref{eq:avg_log}) are consistent, thus immediately leading to consistency of the density and momentum fluxes ${F}^{\mathrm{num},x}_{\rho}$, ${F}^{\mathrm{num},x}_{\rho v_x}$, ${F}^{\mathrm{num},x}_{\rho v_y}$.

We now consider the energy flux in more detail. Per the requirement that ${c_v\left(T^\ast\right)}/{T^\ast} \to {c_v(T)}/{T}$ and $c_v\left(T^{\ast\ast}\right) \to c_v(T)$ as $T_+ \to T$ and $T_- \to T$ we have that 
${c_v\left(T^\ast\right)}/{T^\ast} -\left\{\!\left\{ 1 / T \right\}\!\right\} c_v\left(T^{\ast\ast}\right) = 0$.

Using the consistency of the averaging operators we can therefore write 
\begin{equation}
F^{\mathrm{num},x}_{E} = F^{\mathrm{num},x}_{\rho}  \left(\varepsilon_{\mathrm{int}} - \frac{v_{x}^2 + v_{y}^2 } {2}\right) +  v_{x} F^{\mathrm{num},x}_{\rho v_x} +  v_{y}   F^{\mathrm{num},x}_{\rho v_y}.
\end{equation}
Applying the consistency of the density and momentum fluxes, the energy flux can be rewritten as
\begin{equation}
F^{\mathrm{num},x}_{E} = \rho v_{x} \left(\varepsilon_{\mathrm{int}} - \frac{v_{x}^2 + v_{y}^2 } {2}\right) +  \rho v_{x}^3 + v_{x} p +  \rho v_{x} v_{y}^2 = v_{x}\left(\rho \varepsilon_{\mathrm{int}} + \rho  \frac{v_{x}^2 + v_{y}^2 }{2} + p \right) = v_x \left(E+p \right).
\end{equation}

This can be seen to be exactly equal to the energy flux in the $x$ direction for Euler equations as given by~(\ref{eqns:euler-flux_x}).
\end{proof}

\begin{theorem}
The numerical flux given by Eqns.~(\ref{eq:flux_fx_rho})--(\ref{eq:flux-ec-energy}) is kinetic energy preserving.
\end{theorem}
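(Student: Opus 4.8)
The plan is to reduce the claim to the standard algebraic characterization of kinetic energy preservation due to Jameson~\cite{jameson2008formulation} (see also~\cite{chandrashekar2013kinetic,ranocha2018comparison}): a two-point flux for the Euler equations is kinetic energy preserving as soon as, in every coordinate direction $j$, the mass flux $F^{\mathrm{num},j}_{\rho}$ and the momentum fluxes are linked by
\begin{equation*}
    F^{\mathrm{num},j}_{\rho v_i} = \left\{\!\left\{ v_i \right\}\!\right\}\, F^{\mathrm{num},j}_{\rho} + \widehat{p}\,\delta_{ij},
\end{equation*}
where $\widehat{p}=\widehat{p}(\mathbf{z}_-,\mathbf{z}_+)$ is \emph{some} consistent approximation of the thermodynamic pressure (i.e.\ $\widehat{p}(\mathbf{z},\mathbf{z})=p$) and $\left\{\!\left\{\cdot\right\}\!\right\}$ is the arithmetic average~(\ref{eq:avg_mean}). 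With the momentum flux of this form, the discrete kinetic energy derived from the scheme satisfies a balance law whose only interior contribution is a pressure-work term, so the convective part of the flux produces no spurious kinetic energy.

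So the first step would be to read the required structure off the momentum components of Eqns.~(\ref{eq:flux_fx_rho})--(\ref{eq:flux-ec-energy}). Setting $\widehat{p} \coloneq \frac{k}{m}\left\{\!\left\{ \rho \right\}\!\right\}/\left\{\!\left\{ 1/T \right\}\!\right\}$, one has directly $F^{\mathrm{num},x}_{\rho v_x}=\left\{\!\left\{ v_x \right\}\!\right\}F^{\mathrm{num},x}_{\rho}+\widehat{p}$, $F^{\mathrm{num},x}_{\rho v_y}=\left\{\!\left\{ v_y \right\}\!\right\}F^{\mathrm{num},x}_{\rho}$, $F^{\mathrm{num},y}_{\rho v_x}=\left\{\!\left\{ v_x \right\}\!\right\}F^{\mathrm{num},y}_{\rho}$, and $F^{\mathrm{num},y}_{\rho v_y}=\left\{\!\left\{ v_y \right\}\!\right\}F^{\mathrm{num},y}_{\rho}+\widehat{p}$ --- exactly the criterion, with the \emph{same} $\widehat{p}$ appearing in both normal-momentum components, as is required in more than one space dimension. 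The second step is to check that $\widehat{p}$ is an admissible (consistent) pressure surrogate: when $\rho_-=\rho_+=\rho$ and $T_-=T_+=T$ the averages collapse, $\left\{\!\left\{ \rho \right\}\!\right\}\to\rho$ and $\left\{\!\left\{ 1/T \right\}\!\right\}\to 1/T$, so $\widehat{p}\to \frac{k}{m}\rho T = nkT = p$ by the ideal-gas law of Section~2, and $\widehat{p}$ is clearly well defined for any $T_\pm>0$. Invoking the cited characterization then finishes the argument.

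The ``hard part'' here is not a computation but a matter of phrasing: one must (i) commit to a precise notion of kinetic energy preservation --- the Jameson momentum-flux criterion above, equivalently the demand that the discrete momentum update, contracted with $\left\{\!\left\{\mathbf{v}\right\}\!\right\}$, reproduce a discrete kinetic energy flux modulo pressure work --- and (ii) confirm that the somewhat unusual pressure average $\frac{k}{m}\left\{\!\left\{\rho\right\}\!\right\}/\left\{\!\left\{1/T\right\}\!\right\}$ qualifies as consistent. Both are immediate, so no genuine obstruction appears. If a fully self-contained proof were preferred over citing the criterion, I would instead contract the momentum fluxes with $\left\{\!\left\{\mathbf{v}\right\}\!\right\}$ directly and use the jump product rule $\llbracket ab\rrbracket=\left\{\!\left\{ a\right\}\!\right\}\llbracket b\rrbracket+\left\{\!\left\{ b\right\}\!\right\}\llbracket a\rrbracket$ to exhibit the associated kinetic energy flux explicitly --- a short additional calculation, and the only place where a little care with the averaging identities is needed.
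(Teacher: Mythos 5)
Your proposal is correct and follows essentially the same route as the paper: both arguments read off that the momentum fluxes have the Jameson form $F^{\mathrm{num},j}_{\rho v_i} = \left\{\!\left\{ v_i \right\}\!\right\} F^{\mathrm{num},j}_{\rho} + \widehat{p}\,\delta_{ij}$ with $\widehat{p}=\frac{k}{m}\left\{\!\left\{ \rho \right\}\!\right\}/\left\{\!\left\{ 1/T \right\}\!\right\}$ a consistent pressure approximation, and then invoke the standard kinetic-energy-preservation criterion of Jameson/Chandrashekar/Ranocha. Your write-up is merely more explicit than the paper's (checking all four momentum components and verifying the consistency $\widehat{p}(\mathbf{z},\mathbf{z})=nkT=p$), but there is no substantive difference in approach.
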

\begin{proof}
We consider the flux in the $x$ direction, proof for the $y$ direction is analogous. The momentum flux for $x$ component of the momentum is given by
\begin{equation}
    F^{\mathrm{num},x}_{\rho v_x} =  F^{\mathrm{num},x}_{\rho} \left\{\!\left\{ v_x \right\}\!\right\}  + \frac{k}{m}\frac{ \left\{\!\left\{ \rho \right\}\!\right\}}{ \left\{\!\left\{ 1/T \right\}\!\right\}}.
\end{equation}
The quantity $\frac{k}{m}\frac{ \left\{\!\left\{ \rho \right\}\!\right\}}{ \left\{\!\left\{ 1/T \right\}\!\right\}}$ is a consistent approximation of the pressure, and thus, the numerical flux~(\ref{eq:flux_fx_rho})--(\ref{eq:flux-ec-energy}) is kinetic energy preserving~\cite{jameson2008formulation,chandrashekar2013kinetic,ranocha2018comparison}.
\end{proof}

\begin{remark}
In the case of $c_v=\mathrm{const}$, the flux given by Eqns.~(\ref{eq:flux_fx_rho})--(\ref{eq:flux-ec-energy}) and evaluated using the proposed procedure (piece-wise linear interpolation of $c_v(T)$ and subsequent use of Eqn.~\ref{eq:eta_compute}) reduces to the flux proposed by Chandrashekar~\cite{chandrashekar2013kinetic}, differing only up to a trivial change of variables from $T$ to $R_{\mathrm{specific}}T$.
\end{remark}

\begin{proof}
For the density and velocity fluxes, this is immediately evident if one uses $R_{\mathrm{specific}}T$ instead of $T$ as the variable, where $R_{\mathrm{specific}}=\frac{k}{m}$.
All that remains is to simplify the expressions for the energy flux (we only consider the flux in the $y$ direction, the derivation for the $x$ direction is done by analogy).
For constant $c_v$, we have that $\eta(T_+) - \eta(T_-) = \int_{T_-}^{T_+} {c_v}/{\tau}\mathrm{d}\tau=c_v \left(\log T_+ - \log T_-\right)$.  Since this is a specific case of a piece-wise linear $c_v$, this analytical integration is recovered exactly in our proposed numerical procedure. We can therefore write that $T^{\ast} = \left\{\!\left\{ T \right\}\!\right\}_{\mathrm{log}}$. We also have that $\left\{\!\left\{ \varepsilon_{\mathrm{int}} \right\}\!\right\} = c_v \left\{\!\left\{ T \right\}\!\right\}$. Therefore, we can re-write expression~(\ref{eq:flux-ec-energy}) as
\begin{align}
    F^{\mathrm{num},y}_{E} & = F^{\mathrm{num},y}_{\rho} c_v\left(\frac{\left\{\!\left\{ T \right\}\!\right\}_{\mathrm{geo}}^2}{\left\{\!\left\{ T \right\}\!\right\}_{\mathrm{log}}} - \left\{\!\left\{ T \right\}\!\right\}_{\mathrm{geo}}^2 \left\{\!\left\{ \frac{1}{T} \right\}\!\right\} + \left\{\!\left\{ T \right\}\!\right\} \right) \nonumber \\
    & - F^{\mathrm{num},y}_{\rho}  \frac{\left\{\!\left\{ v_x^2 \right\}\!\right\} +\left\{\!\left\{ v_y^2 \right\}\!\right\}} {2} \nonumber \\
    & + \left\{\!\left\{ v_x \right\}\!\right\} F^{\mathrm{num},y}_{\rho v_x} +  \left\{\!\left\{ v_y \right\}\!\right\}  F^{\mathrm{num},y}_{\rho v_y}.\label{eq:flux-ec-energy-2}
\end{align}
The term $- \left\{\!\left\{ T \right\}\!\right\}_{\mathrm{geo}}^2 \left\{\!\left\{ \frac{1}{T} \right\}\!\right\} + \left\{\!\left\{ T \right\}\!\right\}$ can easily be shown to be equal to 0. Finally, using the definition of $\gamma = c_p/c_v$ and Mayer's relation $c_p = c_v + k/m$, we can write $k/(m(\gamma - 1))$ instead of $c_v$. Thus, we obtain the final expression for the energy flux:
\begin{align}
    F^{\mathrm{num},y}_{E} & = F^{\mathrm{num},y}_{\rho} \left(\frac{k}{m} \frac{1}{\gamma-1}\frac{\left\{\!\left\{ T \right\}\!\right\}_{\mathrm{geo}}^2}{\left\{\!\left\{ T \right\}\!\right\}_{\mathrm{log}}}  - \frac{\left\{\!\left\{ v_x^2 \right\}\!\right\} +\left\{\!\left\{ v_y^2 \right\}\!\right\}} {2} \right) \nonumber \\
    & + \left\{\!\left\{ v_x \right\}\!\right\} F^{\mathrm{num},y}_{\rho v_x} +  \left\{\!\left\{ v_y \right\}\!\right\}  F^{\mathrm{num},y}_{\rho v_y}.\label{eq:flux-ec-energy-3}
\end{align}
This can be seen to be exactly the energy flux as proposed by Chandrashekar~\cite{chandrashekar2013kinetic} up to a trivial change of variables from $T$ to $R_{\mathrm{specific}}T$.
\end{proof}


\begin{theorem}
    Let $\varepsilon_{\mathrm{int}}(T)$ and $c_v(T)$ be piece-wise interpolated with error of order $(\Delta T)^n$, where $\Delta T$ is the spacing between equidistant values of $T$ used for the interpolation. We assume $\eta(T)$ to be computed exactly using the piece-wise interpolation for $c_v(T)$ (see Eqn.~(\ref{eq:eta_compute})).
    Let us also assume 1) a negligibly small error in the non-linear estimation of temperature from the interpolated energy, 2) that the function $\varepsilon_{\mathrm{int}}(T)$ is differentiable, 3) the derivative $c_v(T) = \varepsilon_{\mathrm{int}}'(T)$ is bounded from below by a non-zero value.
    Then the the error in the numerical energy flux evaluated using Eqn.~(\ref{eq:flux-E_x}) using a piece-wise interpolation for $\varepsilon_{\mathrm{int}}(T)$ and $c_v(T)$ when compared to an exact computation of the flux (that is, it is assumed that one can compute $T(\varepsilon_{\mathrm{int}})$ and $\eta(T)$ with no numerical errors) is of the order $\left(\Delta T\right)^{n}$.
\end{theorem}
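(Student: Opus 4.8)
The plan is to trace how the interpolation error propagates through the energy-flux formula~(\ref{eq:flux-E_x}). The first observation is that the density and momentum fluxes are either trivially exact or carry only Lipschitz-controlled error, so the claim really concerns $F^{\mathrm{num},x}_E$, and that \emph{inside}~(\ref{eq:flux-E_x}) every quantity built from the conservative state alone --- $\rho_\pm$, $v_{x,\pm}$, $v_{y,\pm}$, $\varepsilon_{\mathrm{int},\pm}$, hence $F^{\mathrm{num},x}_\rho$, $\left\{\!\left\{ v_x \right\}\!\right\}$, $\left\{\!\left\{ v_y \right\}\!\right\}$, $\left\{\!\left\{ v_x^2 \right\}\!\right\}$, $\left\{\!\left\{ v_y^2 \right\}\!\right\}$, $\left\{\!\left\{ \varepsilon_{\mathrm{int}} \right\}\!\right\}$ --- is computed exactly. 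The interpolation enters through exactly two channels. First, the temperatures $T_\pm$ are obtained by inverting the interpolated $\varepsilon_{\mathrm{int}}(T)$; by assumption~1 the nonlinear solve is itself error-free, so $\varepsilon_{\mathrm{int}}(\widetilde T_\pm)-\varepsilon_{\mathrm{int}}(T_\pm)$ equals the $O((\Delta T)^n)$ interpolation error, and since $c_v=\varepsilon_{\mathrm{int}}'$ is bounded away from zero (assumption~3) the mean value theorem gives $|\widetilde T_\pm-T_\pm|=O((\Delta T)^n)$. Second, $\eta$ is evaluated via~(\ref{eq:eta_compute}) as the \emph{exact} antiderivative of the interpolated $c_v$, so (with the harmless lower cutoff $T_{\mathrm{min}}$, which cancels in jumps) $\widetilde\eta-\eta=\int(\widetilde c_v-c_v)/\tau\,\mathrm{d}\tau$, which is $O((\Delta T)^n)$ on $[T_{\mathrm{min}},T_{\mathrm{max}}]$ because the weight $1/\tau$ is bounded there.

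Next I would isolate the error-bearing factors in~(\ref{eq:flux-E_x}): the geometric mean $\left\{\!\left\{ T \right\}\!\right\}_{\mathrm{geo}}^2$, the mean $\left\{\!\left\{ 1/T \right\}\!\right\}$ (also entering $F^{\mathrm{num},x}_{\rho v_x}$ through the pressure term $\tfrac{k}{m}\left\{\!\left\{ \rho \right\}\!\right\}/\left\{\!\left\{ 1/T \right\}\!\right\}$), and the two divided differences $c_v(T^\ast)/T^\ast=\llbracket\eta\rrbracket/\llbracket T\rrbracket$ and $c_v(T^{\ast\ast})=\llbracket\varepsilon_{\mathrm{int}}\rrbracket/\llbracket T\rrbracket$. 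On the compact temperature range the first two (and $1/\left\{\!\left\{ 1/T\right\}\!\right\}$) are smooth in $T_\pm$ and bounded away from their singularities, hence Lipschitz, so their errors are $O(|\widetilde T_\pm-T_\pm|)=O((\Delta T)^n)$. Since~(\ref{eq:flux-E_x}) is a smooth function of these factors with all remaining multipliers bounded, the theorem reduces to showing that the two divided differences carry error $O((\Delta T)^n)$ \emph{uniformly}, i.e. also as $T_-\to T_+$.

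For this I would use the integral mean-value representations $c_v(T^\ast)/T^\ast=\tfrac{1}{T_+-T_-}\int_{T_-}^{T_+}c_v(\tau)/\tau\,\mathrm{d}\tau$ and $c_v(T^{\ast\ast})=\tfrac{1}{T_+-T_-}\int_{T_-}^{T_+}c_v(\tau)\,\mathrm{d}\tau$: both are \emph{averages} of fixed functions over the interval between $T_-$ and $T_+$. The elementary lemma I would prove is that for Lipschitz $g$ the map $(a,b)\mapsto\tfrac{1}{b-a}\int_a^b g$ is itself Lipschitz in $(a,b)$ (with constant at most half that of $g$), continuously through the diagonal $a=b$. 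For $c_v(T^\ast)/T^\ast$ the interpolated $\eta$ is exactly the average of $\widetilde c_v/\tau$ over $[\widetilde T_-,\widetilde T_+]$, so its error splits into (i) replacing $c_v/\tau$ by $\widetilde c_v/\tau$, bounded by $\|\widetilde c_v-c_v\|_\infty/T_{\mathrm{min}}=O((\Delta T)^n)$, and (ii) moving the endpoints by $O((\Delta T)^n)$, controlled by the lemma --- so it is $O((\Delta T)^n)$. For $c_v(T^{\ast\ast})$ the numerator $\llbracket\varepsilon_{\mathrm{int}}\rrbracket$ is exact and only the denominator is perturbed; writing $\widetilde T_\pm=\phi(T_\pm)$ with $\phi=\widetilde\varepsilon_{\mathrm{int}}^{-1}\circ\varepsilon_{\mathrm{int}}$ a fixed map satisfying $\phi(T)=T+O((\Delta T)^n)$ and (using assumption~3 and the order-$n$ accuracy of the interpolations) $\phi'=1+O((\Delta T)^n)$, the mean value theorem gives $\widetilde T_+-\widetilde T_-=(T_+-T_-)\bigl(1+O((\Delta T)^n)\bigr)$, so the \emph{relative} error of the denominator, and hence of the bounded quantity $c_v(T^{\ast\ast})$, is $O((\Delta T)^n)$. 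Assembling these and multiplying by the bounded prefactors gives the $O((\Delta T)^n)$ bound on~(\ref{eq:flux-E_x}).

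The step I expect to be the main obstacle is exactly this last one: $c_v(T^\ast)/T^\ast$ and $c_v(T^{\ast\ast})$ are $0/0$ expressions at $T_-=T_+$, so the naive estimate ``divide an $O((\Delta T)^n)$ numerator error by $\llbracket T\rrbracket$'' blows up when $\llbracket T\rrbracket$ is small. The remedy --- viewing the divided differences as averages, using Lipschitz continuity of the averaging map, and estimating the temperature increment $\llbracket T\rrbracket$ in \emph{relative} rather than absolute terms --- is the technical core, and it is here that hypotheses~2 and~3 (differentiability of $\varepsilon_{\mathrm{int}}$ and $c_v=\varepsilon_{\mathrm{int}}'$ bounded below) are essential, since they both make the temperature inversion well-conditioned and force $\llbracket\varepsilon_{\mathrm{int}}\rrbracket$ and $\llbracket T\rrbracket$ to vanish at the same rate.
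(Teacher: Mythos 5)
Your proposal follows the same skeleton as the paper's proof --- isolate the error to the term $\left\{\!\left\{ T \right\}\!\right\}_{\mathrm{geo}}^2 \bigl(c_v(T^\ast)/T^\ast - \left\{\!\left\{ 1/T \right\}\!\right\} c_v(T^{\ast\ast})\bigr)$, control the temperature inversion via the mean-value/inverse-function lemma, and split the error in $\llbracket \eta \rrbracket$ into an integrand part and an endpoint part --- but it diverges from the paper in exactly the place where the paper's own argument is weakest. The paper simply asserts that the error in $1/(T_+-T_-)$ is of order $(\Delta T)^n$ and multiplies through; that assertion is not uniform in $(T_-,T_+)$ and degenerates as $\llbracket T \rrbracket \to 0$, which is precisely the $0/0$ obstruction you flag. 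Your remedy --- rewriting $c_v(T^\ast)/T^\ast$ and $c_v(T^{\ast\ast})$ as averages $\tfrac{1}{b-a}\int_a^b g$ and proving that this averaging map is Lipschitz in its endpoints through the diagonal --- is a genuine strengthening: it turns the paper's pointwise estimate into a uniform one and makes explicit why hypotheses 2) and 3) are needed (they force $\llbracket \varepsilon_{\mathrm{int}} \rrbracket$ and $\llbracket T \rrbracket$ to vanish at the same rate). What the paper's shorter route buys is brevity, at the cost of silently excluding the near-diagonal regime (which, in the implementation, is handled separately by the tolerance $r$ in Algorithm~2 rather than by the theorem).

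Two caveats on your version. First, the step $\phi' = 1 + O\bigl((\Delta T)^n\bigr)$ for $\phi = \widetilde{\varepsilon}_{\mathrm{int}}^{-1}\circ\varepsilon_{\mathrm{int}}$ requires the \emph{derivative} of the interpolant to approximate $c_v$ to order $n$; the hypothesis only controls the value error, and for standard piece-wise polynomial interpolation the derivative error is generically $O\bigl((\Delta T)^{n-1}\bigr)$ (the error vanishes at the nodes and oscillates with period $\Delta T$). Consequently the relative perturbation of $\llbracket T \rrbracket$ in the regime $|T_+-T_-| \sim \Delta T$ can be one order lower than claimed, so either an additional derivative-accuracy assumption must be stated or the conclusion weakened to $O\bigl((\Delta T)^{n-1}\bigr)$ in that regime. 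Second, your averaging lemma needs $c_v$ (hence $c_v/\tau$ on $[T_{\mathrm{min}},T_{\mathrm{max}}]$) to be Lipschitz, which is a mild but unstated strengthening of hypothesis 2). Neither issue is addressed by the paper's proof either; with these assumptions made explicit, your argument is the more complete of the two.
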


First, we prove the following lemma. 
\begin{lemma}
Let $p(x)$ be a piece-wise interpolation of a differentiable function $f(x)$ with error of order $(\Delta x)^n$, where $\Delta x$ is the spacing between equidistant values of $x$ used for the interpolation. Let $f'(x) \ge A > 0$, where $A$ is a constant. Then given a value $f_0=f(x_0)$, the following holds: $\left|f^{-1}(f_0) - p^{-1}(f_0) \right| \leq C (\Delta x)^n$, where $C$ is a constant. \label{lemma-inverse}
\end{lemma}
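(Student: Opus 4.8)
The plan is to reduce the whole statement to a single application of the mean value theorem. Since $f'(x) \ge A > 0$, the function $f$ is strictly increasing, so $f^{-1}$ is well defined on the range of $f$; the first thing I would do is record that the interpolant $p$ inherits this property once $\Delta x$ is small enough, so that $p^{-1}(f_0)$ makes sense. For the piece-wise linear interpolation actually used in the paper this is immediate: by the mean value theorem the slope of the linear piece on $[x_i, x_{i+1}]$ equals $(f(x_{i+1}) - f(x_i))/\Delta x = f'(\eta_i) \ge A$ for some $\eta_i$, so $p$ is strictly increasing. For a general interpolant of order $n$ one notes instead that $p'$ differs from $f'$ by a term that is $O((\Delta x)^{n-1})$ and hence stays positive for $\Delta x$ small.

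Next I would set $x_p := p^{-1}(f_0)$, so that $p(x_p) = f_0 = f(x_0)$ by definition of $x_0 = f^{-1}(f_0)$. The hypothesis on the interpolation error gives $|f(x_p) - p(x_p)| \le c_1(\Delta x)^n$ for a constant $c_1$, and substituting $p(x_p) = f(x_0)$ this becomes $|f(x_p) - f(x_0)| \le c_1(\Delta x)^n$. Applying the mean value theorem to $f$ on the interval between $x_0$ and $x_p$, there is $\xi$ in that interval with $f(x_p) - f(x_0) = f'(\xi)(x_p - x_0)$; since $f'(\xi) \ge A$ we obtain $|x_p - x_0| = |f(x_p) - f(x_0)|/f'(\xi) \le (c_1/A)(\Delta x)^n$. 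Setting $C = c_1/A$ and recalling $x_p = p^{-1}(f_0)$, $x_0 = f^{-1}(f_0)$ completes the argument.

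The main obstacle — indeed essentially the only non-routine point — is the well-posedness step: one must be sure that $f_0$ lies in the range of $p$ and that the uniform interpolation error bound is available at the a priori unknown point $x_p$. I would dispose of this by working on the closed tabulated interval $[T_{\mathrm{min}}, T_{\mathrm{max}}]$ on which the error estimate holds uniformly, assuming $x_0$ lies in its interior; then for $\Delta x$ sufficiently small $x_p$ lies in the interval as well, so the bound applies and there is no circularity. Everything else is bookkeeping, and the lower bound $f' \ge A > 0$ is exactly what prevents the inversion from amplifying the $O((\Delta x)^n)$ error.
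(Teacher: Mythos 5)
Your argument is correct and is essentially identical to the paper's proof: both set $\tilde{x}_0 = p^{-1}(f_0)$, use $p(\tilde{x}_0) = f_0 = f(x_0)$ to turn the interpolation error at $\tilde{x}_0$ into a bound on $|f(\tilde{x}_0) - f(x_0)|$, and then divide by $f'(\xi) \ge A$ via the mean value theorem. Your additional remarks on the well-posedness of $p^{-1}$ and on restricting to the tabulated interval are points the paper leaves implicit, but they do not change the route.
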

\begin{proof}
Let us denote $p^{-1}(f_0)$ as $\tilde{x}_0$. As $f$ is differentiable, we can apply the mean value theorem, assuming without loss of generality that $\tilde{x}_0 > x_0$:
\begin{equation}
    \frac{f(\tilde{x}_0) - f(x_0)}{\tilde{x}_0 - x_0} = f'(c),\: c \in [x_0,\tilde{x}_0].\label{eq:mvt}
\end{equation}
We have that $f(x_0) = f_0 = p(\tilde{x}_0)$, so we can write that 
\begin{equation}
    \left| \tilde{x}_0 - x_0 \right| =  \frac{\left|f(\tilde{x}_0) - f(x_0)\right|}{\left|f'(c)\right|}=\frac{\left|f(\tilde{x}_0) -  p(\tilde{x}_0)\right|}{\left|f'(c)\right|} \leq \frac{C_1 (\Delta x)^n} {\left|f'(c)\right|} \leq \frac{C_1}{A} (\Delta x)^n = C (\Delta x)^n.
\end{equation}
\end{proof}

Now we prove the statement of the theorem.
\begin{proof}
We consider the energy flux in the $x$ direction, as proof for the $y$ direction is done by analogy.
Given two vector of conservative variables $\mathbf{u}_{-}$, $\mathbf{u}_{+}$, the values of $\rho$, $v_x$, $v_y$, and $\varepsilon_{\rm{int}}$ for the states $-$ and $+$ can be computed exactly.

Therefore the error comes only from the term $\left\{\!\left\{ T \right\}\!\right\}_{\mathrm{geo}}^2 \left(\frac{c_v\left(T^\ast\right)}{T^\ast} -\left\{\!\left\{ \frac{1}{T} \right\}\!\right\} c_v\left(T^{\ast\ast}\right) \right)$.

Recalling the definitions of ${c_v\left(T^\ast\right)} / {T^\ast}$, $c_v\left(T^{\ast\ast}\right)$, we can re-write the above term as
\begin{equation}
    \left\{\!\left\{ T \right\}\!\right\}_{\mathrm{geo}}^2 \left(\frac{\eta(T_+) - \eta(T_-)}{T_+ - T_-} -\left\{\!\left\{ \frac{1}{T} \right\}\!\right\} \frac{\varepsilon_{\rm{int},+} - \varepsilon_{\rm{int},-}}{T_+ - T_-} \right).\label{eq:flux_error_source}
\end{equation}

According to Lemma~\ref{lemma-inverse}, the error in the estimation of $T_{-}$ and $T_{+}$ is of order $(\Delta T)^n$. Therefore, the errors in the computation of $\left\{\!\left\{ T \right\}\!\right\}_{\mathrm{geo}}^2$, $\left\{\!\left\{ \frac{1}{T} \right\}\!\right\}$, and $\frac{1}{T_+ - T_-}$ are also all of order $(\Delta T)^n$; this also holds for the error in the computation of $\varepsilon_{\rm{int},+}$, $\varepsilon_{\rm{int},-}$ and for any products of these quantities.
Therefore, the error in $\left\{\!\left\{ T \right\}\!\right\}_{\mathrm{geo}}^2 \left\{\!\left\{ \frac{1}{T} \right\}\!\right\} \frac{\varepsilon_{\rm{int},+} - \varepsilon_{\rm{int},-}}{T_+ - T_-}$ is of order $(\Delta T)^n$.

The only remaining term to analyze is $\eta(T_+) - \eta(T_-)$. There are two sources of error: 1) the error due to the interpolation used for $c_v(T)$ and 2) the error due to the choice of interpolation points $T_-$, $T_+$ (as they are computed from an interpolation of energy). The errors due to the interpolation of $c_v(T)$ and the choice of integration limits are both of order $(\Delta T)^n$. Thus, the errors in the computation of $\eta(T_+)$,  $\eta(T_-)$ are also of order $(\Delta T)^n$

Therefore, the term~(\ref{eq:flux_error_source}) appearing in the energy flux has error of order $(\Delta T)^n$.
\end{proof}

This concludes the analysis of the properties of the developed entropy-conserving numerical fluxes for a gas with an arbitrary internal energy function. To summarize:
\begin{itemize}
    \item The flux is consistent.
    \item The flux is kinetic energy preserving.
    \item The flux reduces to the entropy-conservative flux of Chandrashekar in case of a constant specific heat.
    \item If piece-wise interpolation with a step-size of $\Delta T$ and error of order $(\Delta T)^n$ is used to compute the internal energy $\varepsilon_{\mathrm{int}}(T)$ and $c_v(T)$, whilst the integral part of the entropy $\eta$ is computed exactly using the piece-wise interpolation for $c_v(T)$, then the error in the energy flux is of the order $(\Delta T)^n$.
\end{itemize}

\subsection{Algorithm for pre-processing and flux computation}
Here, an outline of the overall algorithm for the pre-computation of relevant tabulated values and subsequent estimation of the flux components is presented.

\begin{algorithm}
\caption{Pre-computation of tabulated quantities}\label{alg:tables}
\begin{algorithmic}
\Require $T_{min} > 0$,  $T_{max} > T_{min}$, $\Delta T > 0$, $\varepsilon_{\mathrm{int}}(T)$, $c_v(T)$
\State $N \gets \lfloor (T_{max} - T_{min}) / \Delta T \rfloor$  \Comment{Compute number of array elements}

\For{$i = 0, \dots, N$}
    \State $T_A[i] \gets T_{min} + i \Delta T$ \Comment{Set element of uniformly spaced temperature array}
    
    
    \State $E_A[i] \gets \varepsilon_{\mathrm{int}}(T_A[i])$ \Comment{Compute and set element of energy array}
    \State $c_{v,A}[i] \gets c_{v}(T_A[i]) $ \Comment{Compute and set element of specific heat array}
    
    \If{$i$ is 0}
    \State $\eta_{A}[i] \gets 0$
    \Else
    \State $\eta_{A}[i] \gets \eta_{A}[i-1] + \left(c_{v,A}[i-1] - \frac{(c_{v,A}[i] - c_{v,A}[i-1]) T_A[i]}{\Delta T}\right) \ln\left(\frac{T_A[i]}{T_A[i-1]}\right) + \left(c_{v,A}[i] - c_{v,A}[i-1]\right)$    \Comment{Compute integral part of entropy using piece-wise interpolation for $c_v$}
    \EndIf
\EndFor
\end{algorithmic}
\end{algorithm}

Algorithm~\ref{alg:tables} shows a pseudocode description of how the necessary tabulated quantities are precomputed. To distinguish between the flow variables arrays of tabulated values, we subscript the latter with $A$. All of the arrays $T_A$, $E_{A}$, $c_{v,A}$, $\eta_A$ are of size $N+1$ and store the tabulated values. The user has to provide the minimum and maximum values of the temperature range, as well as the temperature step $\Delta T$. The functions for computation of the specific internal energy $\varepsilon_{\mathrm{int}}$ and the specific heat $c_{v}$ from temperature are assumed to be known. Naturally, the discretization step $\Delta T$ should be chosen such that the error due to the approximations used is acceptably small; the impact of the choice of  $\Delta T$ is discussed in the next section.

\begin{algorithm}
\caption{Computation of quantities necessary for calculation of fluxes}\label{alg:fluxes}
\begin{algorithmic}
\Require $\rho_{-}$, $v_{x,-}$, $v_{y,-}$, $\varepsilon_{\mathrm{int},-}$, $\rho_{+}$, $v_{x,+}$, $v_{y,+}$, $\varepsilon_{\mathrm{int},+}$

\For{$i = -, +$}  \Comment{Compute relevant quantities for left and right states}
    \State $T_{i} \gets T(\varepsilon_{\mathrm{int},i})$ \Comment{Compute temperature from energy using a non-linear solver}
    \State $f_{T,i} \gets (T_{i} - T_{min}) / \Delta T$
    \State $I_{T,i} \gets \lfloor f_{T,i} \rfloor$ \Comment{Find position in array of temperature}
    \State $f_{T,i} \gets f_{T,i} - I_{T,i}$ \Comment{Find linear interpolation coefficient}
    \State $c_{v,i} \gets (1 - f_{T,i}) \cdot c_{v,A}[I_{T,i}] + f_{T,i} \cdot c_{v,A}[I_{T,i} + 1] $  \Comment{Find specific heat from temperature}
    \State $\eta_{i} \gets \eta_{A}[I_{T,i}] + \left(c_{v,A}[I_{T,i}] - \frac{(c_{v,i} - c_{v,A}[I_{T,i}]) T_A[I_{T,i}]}{\Delta T}\right) \ln\left(\frac{T}{T_A[I_{T,i}]}\right) + \left(c_{v,i} - c_{v,A}[I_{T,i}]\right)$  \Comment{Compute integral part of entropy}
\EndFor
\If{$|T_{+} - T_{-}| < r$} \Comment{$r$ is a small fixed tolerance to avoid division by 0}
    \State $\overline{T} \gets \left(T_{+} - T_{-} \right)/2$
    \State $f_{\overline{T}} \gets (\overline{T} - T_{min}) / \Delta T$
    \State $I_{\overline{T}} \gets \lfloor f_{\overline{T}} \rfloor$
    \State $f_{\overline{T}} \gets f_{\overline{T}} - I_{\overline{T}}$
    \State $\overline{c_{v}} \gets (1 - f_{\overline{T}}) \cdot c_{v,A}[I_{\overline{T}}] + f_{\overline{T}} \cdot c_{v,A}[I_{\overline{T}} + 1] $  \Comment{Find specific heat from temperature}
    \State $c_{v}(T^{\ast})/T^{\ast} \gets \overline{c_{v}} / \overline{T}$
    \State $c_{v}(T^{\ast \ast}) \gets \overline{c_{v}}$
\Else
    \State $c_{v}(T^{\ast})/T^{\ast} \gets \left(\eta_{+} - \eta_{-}\right) / \left(T_{+} - T_{-} \right)$
    \State $c_{v}(T^{\ast \ast}) \gets \left(\varepsilon_{\mathrm{int},+} - \varepsilon_{\mathrm{int},-}\right) / \left(T_{+} - T_{-} \right)$
\EndIf
\end{algorithmic}
\end{algorithm}
Algorithm~\ref{alg:fluxes} shows the computation of the quantities required for the calculation of the fluxes as given by Eqns.~(\ref{eq:flux_fx_rho})--(\ref{eq:flux-ec-energy}). First, the temperature is computed from the internal energy, and is then used to approximate the specific heats and integral parts of the entropy of the left and right states. In order to avoid division by zero, in case the temperatures of the left and right states are close (as defined by some prescribed tolerance $r$, for example $r=0.5\Delta T$), the jumps in the integral part of the entropy and internal energy are simply replaced by their derivatives evaluated at $T=(T_{-}+T_{+})/2$. Such a choice of $T$ also ensures consistency of the flux in accordance with Theorem~\ref{theorem:consistency}. It should be noted that the non-linear solver used to compute the temperature from energy also uses the the linear interpolation of the energy and specific heats in the iterations.

\subsection{Computational complexity}
We now briefly analyze the computational cost of the algorithm.
Compared to the case of constant $c_v$, for each entropy-conservative flux computation, the following additional computations are required:
\begin{enumerate}
    \item computation of $T_+$, $T_-$ from $\varepsilon_{\mathrm{int},+}$, $\varepsilon_{\mathrm{int},-}$ using a non-linear solver,
    \item calculation of $\left \llbracket \varepsilon_{\mathrm{int}} \right \rrbracket$, $\left \llbracket   \eta(T) \right \rrbracket$,
    \item calculation of ${c_v\left(T^\ast\right)} / {T^\ast}$, $c_v\left(T^{\ast\ast}\right)$.
\end{enumerate}
In the following analysis we disregard the cost of the computation of temperature from energy (as this can be a call to an independent non-linear solver).
For the case of $|T_{+} - T_{-}| < r$, only subtraction, division, and multiplication operations are required. For the case of $|T_{+} - T_{-}|  \geq r$, additional operations are required, as two specific heats need to be linearly interpolated instead of one; in addition, two logarithms need to be computed. Some of the division operations, such as dividing by $\Delta T$, can be replaced by multiplication operations if the quantity $1/\Delta T$ is pre-computed and stored, leading to a lower computational effort~\cite{fog2020instruction}. 
The formulation in~\cite{peyvan2023high} requires at least three exponentiations and computing two logarithms, whereas the present algorithm does not require the exponentiation operations, thus leading to a potentially lower computational cost. Again, it is important to note that in this brief analysis, the cost of computing the temperature from the energy is neglected.
In addition, it should again be stressed that the computational cost of the flux formulation proposed in the present work is independent of the exact expressions for the internal energy and specific heat used in the precomputation step.
Another possible factor influencing the computational efficiency of the proposed algorithm not considered in the present analysis is the role of cache effects. The approach proposed in the present work requires storing several large one-dimensional arrays of floating point numbers, which are being accessed at every step of the simulation, which could lead to frequent loading and eviction of data from the CPU cache. Timing results of the actual implementations of the various flux functions are presented in the next section.

\subsection{Dissipative flux}
The entropy-conservative flux~(\ref{eq:flux_fx_rho})--(\ref{eq:flux-ec-energy}) can be augmented with a dissipation term and used to reduce the oscillations in flows with strong shocks.
In principle, the dissipative term can be chosen such that flux is entropy-stable:
\begin{equation}
    \mathbf{F}^{\mathrm{num},ES,j} = \mathbf{F}^{\mathrm{num},EC,j} - \frac{1}{2} \mathbf{D} \llbracket \bm{\omega} \rrbracket,\: j=x,y.\label{eq:flux_es_diffusive}
\end{equation}
Here $\mathbf{D}$ is a dissipation matrix constructed in a way such that the flux $\mathbf{F}^{\mathrm{num},ES,j}$ is entropy-stable~\cite{derigs2017novel,winters2017uniquely}. The construction of a suitable dissipation operator $\mathbf{D}$ is left for future work. In the present work, for simulations with strong shocks a local Lax-Friedrichs scalar dissipation is added to the entropy-conservative flux:
\begin{equation}
    \mathbf{F}^{\mathrm{num},D,j} = \mathbf{F}^{\mathrm{num},EC,j} - \frac{\lambda}{2}\llbracket \mathbf{u} \rrbracket,\: j=x,y,\label{eq:flux_diffusive}
\end{equation}
where $\mathbf{F}^{\mathrm{num},ES,i}$ is the entropy-stable version of the entropy-conservative flux $\mathbf{F}^{\mathrm{num},EC,j}$, and $\lambda$ is the fastest local wave speed.
This version of the flux is used for the simulations containing strong shocks presented in the next section.

\section{Numerical results}
To verify the developed flux and apply it to simulation of two-dimensional high-speed flows, the developed algorithms for computation of flow properties and fluxes were implemented in Trixi.jl~\cite{schlottkelakemper2020trixi,ranocha2021adaptive}, a modular framework for solving systems of hyperbolic equations using the DGSEM method. The strong stability preserving Runge-Kutta method~\cite{gottlieb2005high} SSPRK43 was used for time integration, as provided by the DifferentialEquations.jl library~\cite{rackauckas2017differentialequations}. The simulation code is publicly available on Github~\cite{oblapenko2024entropyconservativeRepro}.

\subsection{Comparison to exact expressions for fluxes}
First, the developed flux is compared to existing entropy-conserving fluxes for cases where exact expressions for the latter are available. We it compare to the flux of Chandrashekar~\cite{chandrashekar2013kinetic} in the case of a calorically perfect gas, and to the flux of Peyvan et al.~\cite{peyvan2023high} in the case of a gas described by the infinite harmonic oscillator model.
\begin{figure}[h]
    \centering
    \includegraphics[width=0.7\textwidth]{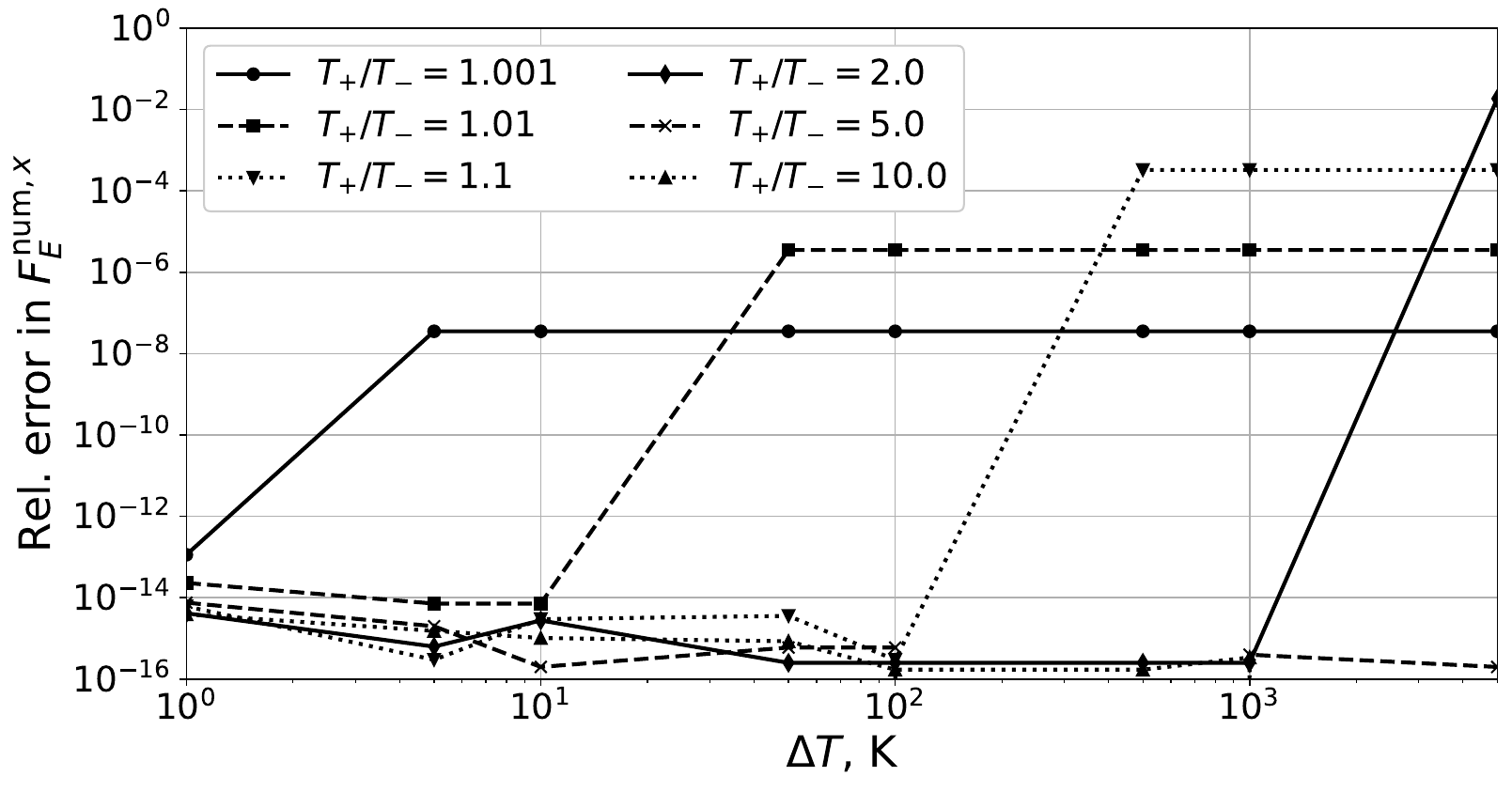}
    \caption{Relative error in the energy flux for different temperature jumps as a function of the discretization step $\Delta T$ for a calorically perfect gas.}
    \label{fig:flux_error}
\end{figure}


\subsubsection{Calorically perfect gas}
In the case of a calorically perfect gas, the error due to interpolation of temperature is almost zero, as in that case temperature is a linear function of energy and the inversion via the Newton-Solver is nearly exact.
For this test case we assume that $c_v=\frac{5}{2}\frac{k}{m}$, which corresponds to an adiabatic index $\gamma=1.4$. The densities and velocities of the left and right states were assumed to be and given by $\rho_{-}=\rho_{+}=3.8485\cdot10^{-3}$~kg/m$^3$, $v_{x,-}=v_{x,+}=1000$~m/s, $v_{y,-}=v_{y,+}=500.0$~m/s, whereas the temperature of the left state $T_{-}$ was taken to be 1000~K and six different temperature jumps were considered: $T_{+}/T_{-}=1.001$, $T_{+}/T_{-}=1.01$, $T_{+}/T_{-}=1.1$, $T_{+}/T_{-}=2.0$, $T_{+}/T_{-}=5.0$, and $T_{+}/T_{-}=10.0$.
The tolerance $r$ used in Alg.~\ref{alg:fluxes} was taken to be equal to $\Delta T/2$; that is, for $\left|T_{+}-T_{-}\right|<r$ the specific heat was simply evaluated at the midpoint $\left(T_{+}+T_{-}\right)/2$, as opposed to using look-up tables for the integral part of the entropy and the internal energy. In practice, significantly smaller values of the tolerance $r$ should be used, as it leads to a noticeable reduction in the error in the flux evaluation, as will be seen further in Sections~\ref{sec:periodic},~\ref{sec:blast}. However, to better showcase how the flux behaviour changes once $\left|T_{+}-T_{-}\right|\geq r$, the relatively large value of $r=\Delta T / 2$ was chosen for this test case.

Figure~\ref{fig:flux_error} shows the error in the evaluation of the flux for the energy $F_{E}^{\mathrm{num},x}$ using Eqn.~\ref{eq:flux-E_x} computed via Algorithm~\ref{alg:fluxes} for various choices of $\Delta T$, as compared to the exact expression for the calorically perfect gas as given by Eqn.~\ref{eq:flux-ec-energy-3}, i.e. the entropy-conservative flux of Chandrashekar~\cite{chandrashekar2013kinetic}. It can be observed that as long as $\left|T_{+}-T_{-}\right|\geq \Delta T / 2$, the error in the flux computed via Algorithm~\ref{alg:fluxes} is extremely small, almost on the order of machine precision. It can also be seen that for a given temperature jump $T_+/T_-$, as soon as $\Delta T$ is chosen large enough that $\left|T_{+}-T_{-}\right|<\Delta T / 2$, the error in the flux becomes significantly larger. Therefore, it can already be seen that using smaller values of the tolerance $r$ will lead to smaller errors, as in fewer cases will the simplified computation be used. For reasonable choices of a temperature discretization step (i.e. 1--10~K), the error still remains  low even when the flux is computed using the simplified approximation with the mean temperature.




\subsubsection{Infinite harmonic oscillator}
Next, we compare the energy flux~(\ref{eq:flux-E_x}) to the flux derived in~\cite{peyvan2023high} for the case of the infinite harmonic oscillator model. The gas was assumed to be molecular oxygen (with a molecular mass $m=5.3134\times10^{-26}$ kg), and a characteristic vibrational temperature of $\theta_v=2273.5$~K was used. The same flow conditions and temperature jumps were considered as for the previous case, and values of the tolerance $r$ was again chosen to be $\Delta T/2$. For the evaluation of both fluxes, the temperatures were assumed to be known exactly, i.e. no impact of the non-linear solver was considered. It should be noted that the numerical flux proposed in~\cite{peyvan2023high} relies upon a Taylor expansion of a hyperbolic sine function of a non-linear combination of $T_-$ and $T_+$, using only the first four terms of the series. For larger temperature jumps, the expansion was found to require one extra term to provide a sufficiently accurate value.

\begin{figure}[h]
    \centering
    \includegraphics[width=0.7\textwidth]{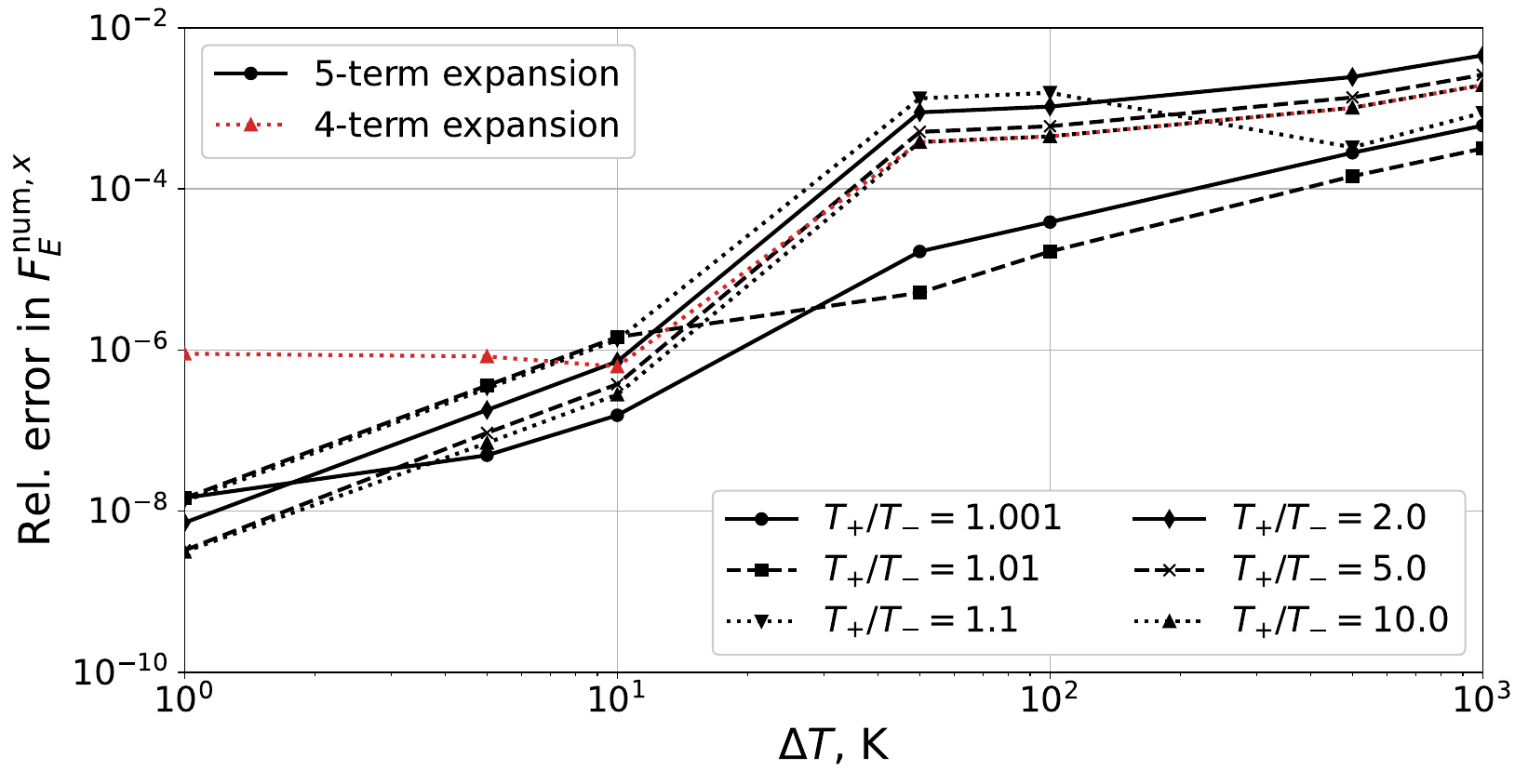}
    \caption{Relative error in the energy flux as a function of the discretization step $\Delta T$, infinite harmonic oscillator.}
    \label{fig:flux_error_iho}
\end{figure}

Figure~\ref{fig:flux_error_iho} shows the relative errors in the energy flux (compared to the flux from~\cite{peyvan2023high} assuming exactly known temperatures) for different temperature jumps as a function of the discretization step $\Delta T$. The black lines show the error relative to the flux of Peyvan et al.~\cite{peyvan2023high} when a 5-term Taylor expansion is used, the red line shows the error relative to the same flux when only 4 terms are used for the larges temperature ratio $T_+/T_-=10$.

Compared to the case of the calorically perfect gas, the choice of temperature discretization step $\Delta T$ has a noticeable impact on the accuracy of the flux, as it affects the accuracy of the computation of $c_v(T)$ and as a result, the accuracy of the computation of $\eta(T)$.

Again, it can be seen that at small values of $\Delta T$, the proposed flux expression~(\ref{eq:flux-E_x}) achieves excellent error. The error in this case is specific to the linear interpolation used in the present work, however, the developed numerical flux formulation allows for use of higher-order interpolations or even exact expressions for energy, specific heats, and the integral part of the entropy. A practical large-scale simulation will also require maintaining a balance between computational cost and accuracy --- as will be discussed in the next subsection, use of simplified interpolations based on tabulated values can provide a noticeable speed-up compared to use of exact analytical expressions. Even at extremely large values of $\Delta T$, i.e. 100--1000~K, the error compared to the exact flux derived in~\cite{peyvan2023high} is still less than 1\%.

It can also be seen that for the largest temperature jump considered, the 4-term expansion used in~\cite{peyvan2023high} proves to be insufficiently accurate. In fact, the number of terms required in the Taylor expansion used can be considered to be an implicit numerical parameter, higher values of which may lead to an increase in accuracy, but also to an increased computational cost. Whereas in the present work the main parameter is the explicitly defined $\Delta T$, which does not affect the computational cost apart from the pre-processing step at the start of the simulation.



\subsubsection{Computational cost comparison}
We carry out a brief comparison between the computational cost of calculating the flux vector~(\ref{eq:flux_fx_rho})--(\ref{eq:flux-E_x}) in conjunction with Eqn.~\ref{eq:eta_compute}, the computational cost of calculating the flux vector as given in~\cite{peyvan2023high}, and the computational cost of evaluating the flux of Chandrashekar~\cite{chandrashekar2013kinetic} for a calorically perfect gas. The flux of Chandrashekar~\cite{chandrashekar2013kinetic} was included in the analysis as a baseline, despite its assumption of a calorically perfect gas.
All fluxes were implemented in the Julia programming language.

\begin{table}[h!]
\centering
\begin{tabular}{ |c|c|c| } 
 \hline

 Present work & Peyvan et al.~\cite{peyvan2023high}, 4-term expansion & Chandrashekar~\cite{chandrashekar2013kinetic} \\ 
 \hline
 17~ns & 23~ns & 11~ns \\
 \hline
\end{tabular}
\caption{Time required for one flux evaluation, omitting computation of temperature from energy.}
\label{tab:cost_comparison}
\end{table}

Table~\ref{tab:cost_comparison} presents the computational costs of one flux evaluation for the three flux functions considered. As expected the flux of Chandrashekar~\cite{chandrashekar2013kinetic} is computationally the most of efficient, since for the calorically perfect gas no interpolations are required. For the flux derived in~~\cite{peyvan2023high}, the original proposed 4-term Taylor expansion of the hyperbolic sine function was used; as discussed above, for large temperature jumps, this may be insufficient and more terms may need to be included, leading to higher computational costs. It can be seen that the proposed flux function is approximately 50\% slower than the flux of Chandrashekar~\cite{chandrashekar2013kinetic}, but still roughly 25\% faster than the analytical flux for the infinite harmonic oscillator model with the 4-term Taylor expansion~\cite{peyvan2023high}.

\begin{table}[h!]
\centering
\begin{tabular}{ |c|c| } 
 \hline

 Inversion using interpolated $\varepsilon_{\mathrm{int}}$, $c_v$ & Inversion using exact formulas for energy and $c_v$\\ 
 \hline
75-85~ns & 400~ns \\
 \hline
\end{tabular}
\caption{Time required to compute $T$ from internal energy.}
\label{tab:cost_comparison_tinv}
\end{table}

As already mentioned, the computational costs of the flux~(\ref{eq:flux_fx_rho})--(\ref{eq:flux-E_x}) and the flux from~\cite{peyvan2023high} do not include the cost of the inversion of the energy in order to obtain the temperature. Assuming that $T_{-}=1000$~K and $T_{+}=1500$~K, and using an initial guess of $300$~K as the starting solution, Mutation++ (compiled using the \texttt{-O3} compiler flag) requires approximately 400~ns to compute  $T_{-}$ and $T_{+}$ from the corresponding energies; it does so using exact analytical formulas for the vibrational energy and specific heat of vibrational degrees of freedom, as given by~(\ref{eq:iho}). Using the piece-wise linear interpolation of the internal energy and specific heat, an implementation of Newton's method in Julia requires approximately 75-85~ns for the computation of $T_{-}$ and $T_{+}$ when also using an initial guess of $300$~K and the same relative and absolute tolerance of $10^{-12}$. For a more complicated internal energy function (such as~(\ref{eq:cutoffosc}) the cost of inverting the  energy to obtain temperature will be even higher. These results are summarized in Table~\ref{tab:cost_comparison_tinv}.

Therefore, it can be concluded that the flux function itself~(\ref{eq:flux_fx_rho})--(\ref{eq:flux-E_x}) is computationally very efficient, especially for cases of a calorically imperfect gas, as it allows for use of arbitrary internal energy functions without the computational complexity of those having an impact on the cost of the flux evaluations. The use of a piece-wise linear approximation for the internal energy and specific heats also leads to a significant speed-up of the Newton solver used to compute the temperature from the internal energy.

\subsection{Periodic flow}\label{sec:periodic}
To assess the performance of the developed flux, we consider a two-dimensional test case with periodic boundary conditions, similar to the one investigated in~\cite{peyvan2023high}, but with a larger temperature variation. A square domain $[0, 1]\times[0, 1]$ was assumed. The gas was taken to be molecular nitrogen, with a molecular mass of $m=4.6517\times10^{-26}$ kg and with the vibrational spectrum modelled by an infinite harmonic oscillator model with $\theta_v$ = 3393.5~K. The temperature discretization $\Delta T$ was taken to be 1~K. The pressure in the domain was taken to be constant as $p=195256$~Pa, and a velocity vector $\mathbf{v}=(11450; 0)^{\mathrm{T}}$ m/s was assumed. A temperature profile varying in the $x$ direction was prescribed as $T(x)=9000.0 + 2000 \sin(2 \pi x)$. A uniform $64\times 64$ grid was used for the discretization of the domain, and 3-rd degree polynomials were used in the DG method. All shock-capturing functions and positivity-preserving limiters were turned off, and the flux~(\ref{eq:flux_fx_rho})--(\ref{eq:flux-E_x}) was also used for the interface fluxes (instead of a Riemann solver), to ensure that no stabilization or dissipation schemes can impact the entropy conservation. The total entropy production rate (over all cells in the simulation domain), which can be defined as
\begin{equation}
    s_{t,\mathrm{tot}}(t) = \sum \frac{\partial s}{\partial \mathbf{u}} \cdot \frac{\partial\mathbf{u}}{\partial t}
\end{equation}
was computed via analysis callbacks available in Trixi.jl. Different values of the tolerance $r$ (as used in Alg.~\ref{alg:fluxes}) were used, ranging from $0.5$~K to $10^{-6}$~K.

\begin{figure}[h]
    \centering
    \includegraphics[width=0.7\textwidth]{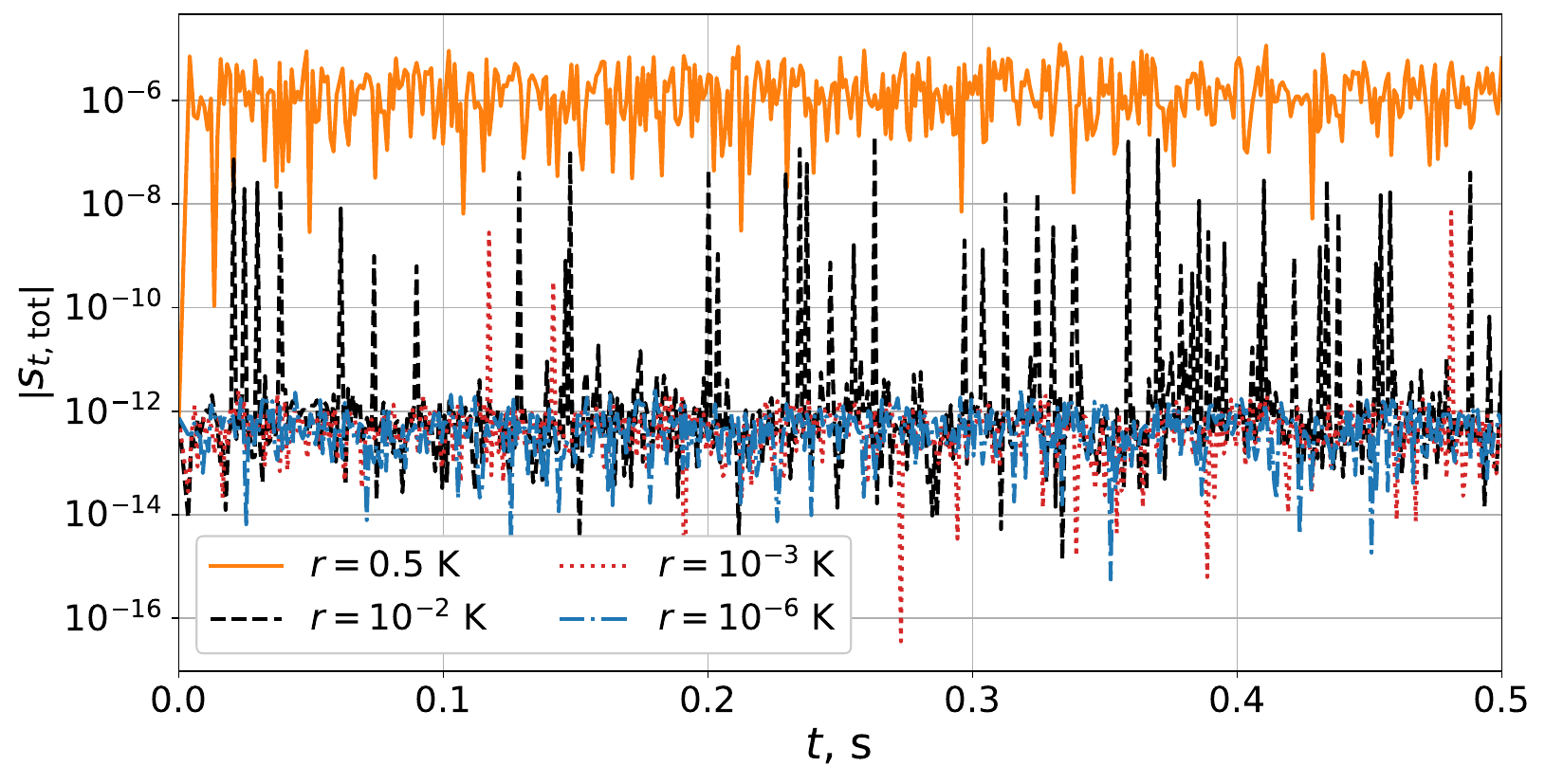}
    \caption{Absolute value of the entropy production rate over the course of the simulation.}
    \label{fig:entropy_production_rate}
\end{figure}
\begin{figure}[h]
    \centering
    \includegraphics[width=0.7\textwidth]{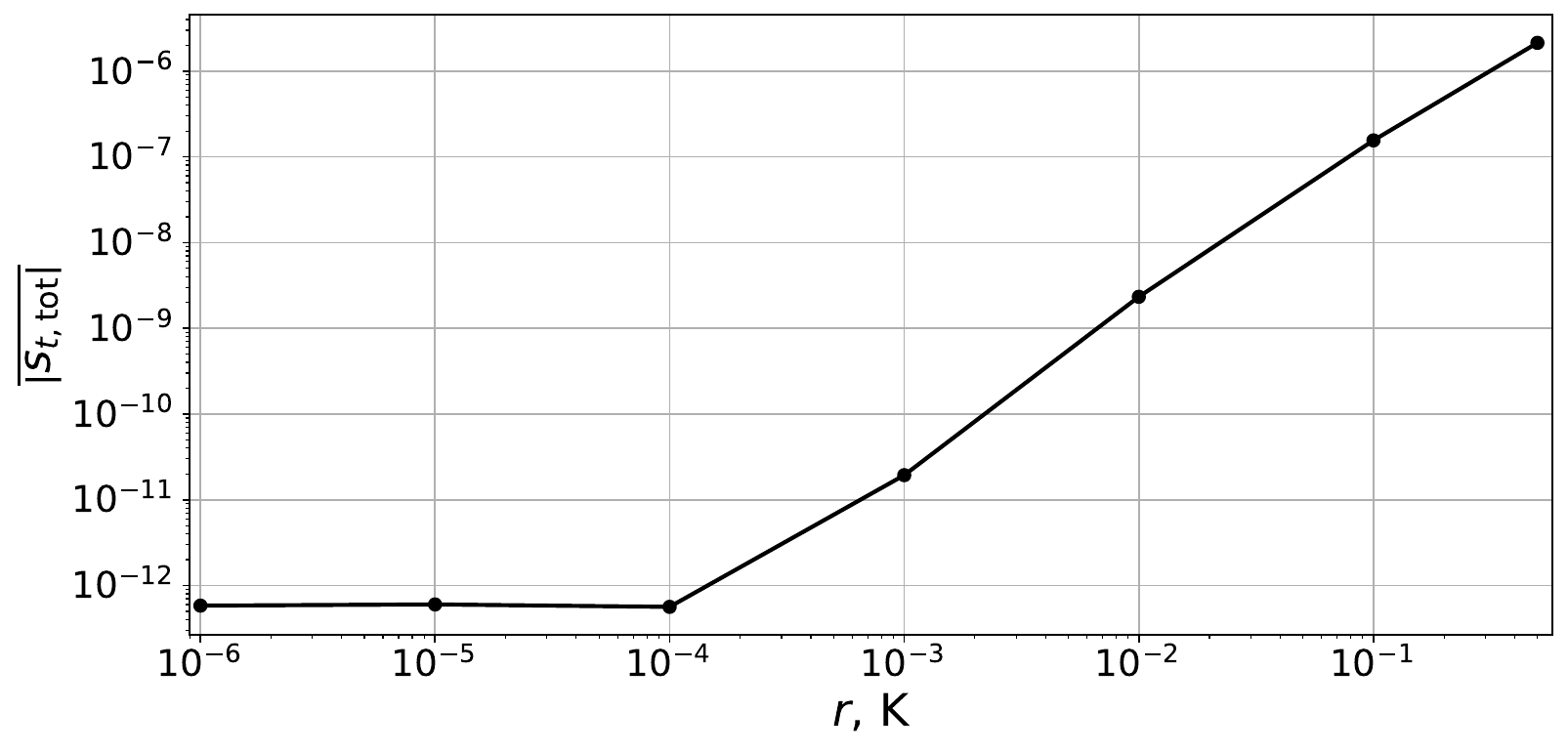}
    \caption{Time-averaged absolute value of the entropy production rate as a function of the parameter $r$.}
    \label{fig:entropy_production_rate_convergence_r}
\end{figure}

Figure~\ref{fig:entropy_production_rate} shows the total entropy production rate plotted over time for different values of the tolerance $r$. It can be clearly seen that the entropy is not conserved to machine precision, due to the errors in the computation of temperature and the integral part of the entropy; however, the overall magnitude of the entropy production rate is still very small. Moreover, lower values of $r$ lead to noticeably lower values of the entropy production rate. To better assess the impact of the value of $r$ on the entropy production rate, we plot the time-averaged absolute value of the entropy production rate as a function of $r$, as shown on Fig.~\ref{fig:entropy_production_rate_convergence_r}. It can be seen that decreasing $r$ up to a value of $10^{-4}$~K leads to decreasing entropy production rates due to errors in the flux function; use of values of $r$ smaller than $10^{-4}$~K however does not lead to lower entropy production rates due to other sources of error becoming dominant.

\subsection{Blast wave}\label{sec:blast}
Next, we consider the weak blast wave condition adapted from~\cite{hennemann2021provably}. The initial conditions are given by
\begin{equation}
    \begin{bmatrix}
           \rho \\
           v_x \\
           v_y \\
           p
         \end{bmatrix} = \begin{bmatrix}
           0.341388 \\
           0.0 \\
           0.0 \\
           101325.0
         \end{bmatrix}\,\mathrm{if}\,\sqrt{x^2+y^2}>0.5;\quad 
    \begin{bmatrix}
           \rho \\
           v_x \\
           v_y \\
           p
         \end{bmatrix} = \begin{bmatrix}
           0.399117 \\
           102.5 \cos(\phi) \\
           102.5 \sin(\phi) \\
           126149.6
         \end{bmatrix}\mathrm{else}.
\end{equation}
Here $\phi=\tan^{-1}(y/x)$. The initial conditions of the problem correspond to a circular-shaped region  with a radius of 0.5~m of higher density and pressure with a constant outward radial velocity.
As in the previous case, the gas was assumed to be molecular nitrogen, with a vibrational spectrum described by the infinite harmonic oscillator model. A square domain of size $[-2, 2]\times[-2, 2]$ was used, discretized by a uniform $64\times 64$ grid. 3-rd degree polynomials were used in the DG method. Similar to the previous case, different values of the tolerance $r$ (as used in Alg.~\ref{alg:fluxes}) were used, ranging from $0.5$~K to $10^{-6}$~K.


\begin{figure}[h]
    \centering
    \includegraphics[width=0.8\textwidth]{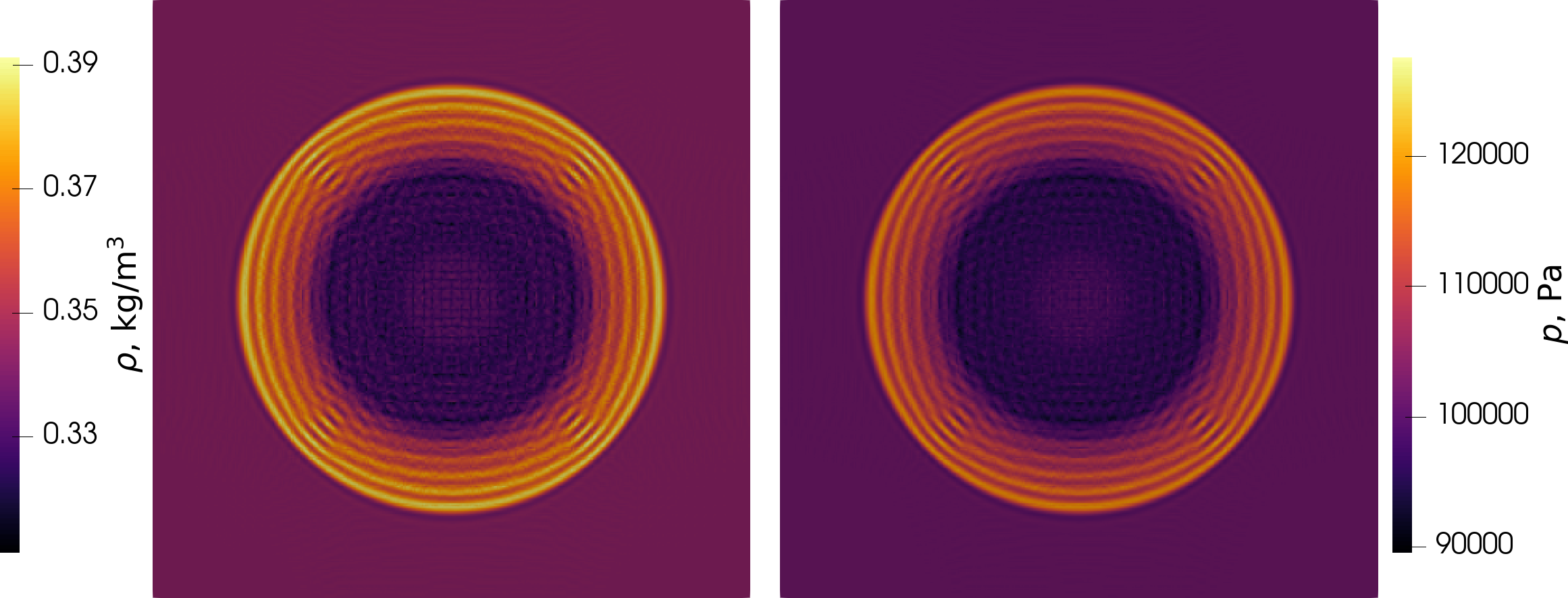}
    \caption{Density (left) and pressure (right) at $t=0.75$~s for the weak blast wave test case, 64$\times$64 grid.}
    \label{fig:p_rho_blastwave}
\end{figure}

Figure~\ref{fig:p_rho_blastwave} shows the density and pressure profiles at $t=0.75$~s. The oscillations in the density and pressure in the radial direction are due to the absence of numerical diffusion in the simulation. The ripples observed in the center of the domain are numerical artifacts are due to the use of a coarse uniform Cartesian grid, which is poorly suited for the radially symmetric problem under consideration. Nevertheless, as expected, the simulation is stable, even without dissipative surface fluxes, shock capturing, and/or positivity preserving limiters.

\begin{figure}[h]
    \centering
    \includegraphics[width=0.825\textwidth]{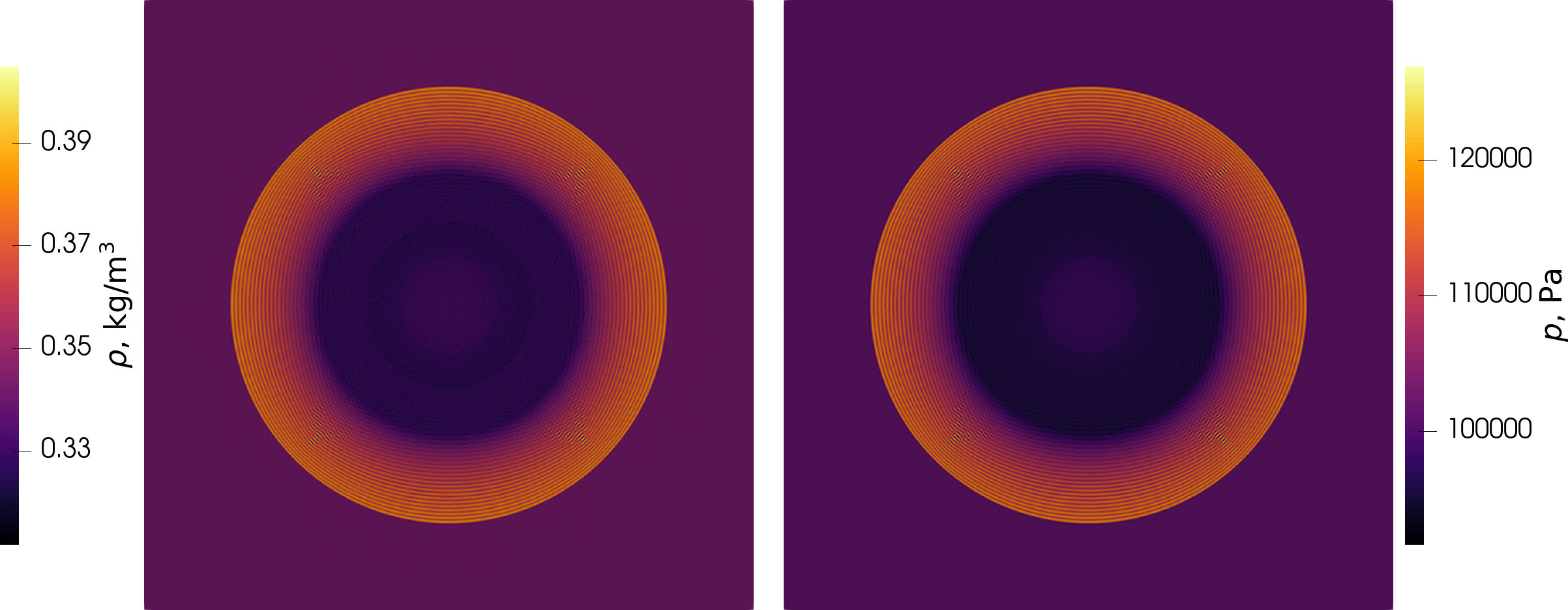}
    \caption{Density (left) and pressure (right) at $t=0.75$~s for the weak blast wave test case, 256$\times$256 grid.}
    \label{fig:p_rho_blastwave_256}
\end{figure}

Figure~\ref{fig:p_rho_blastwave_256} shows the same simulation on a uniform $256\times 256$ grid. The improved grid resolution can be seen to remedy the ripples in the center of the domain, whereas the oscillations across the blast wave front remain, as no numerical diffusion is applied within the solver.

\begin{figure}[h]
    \centering
    \includegraphics[width=0.7\textwidth]{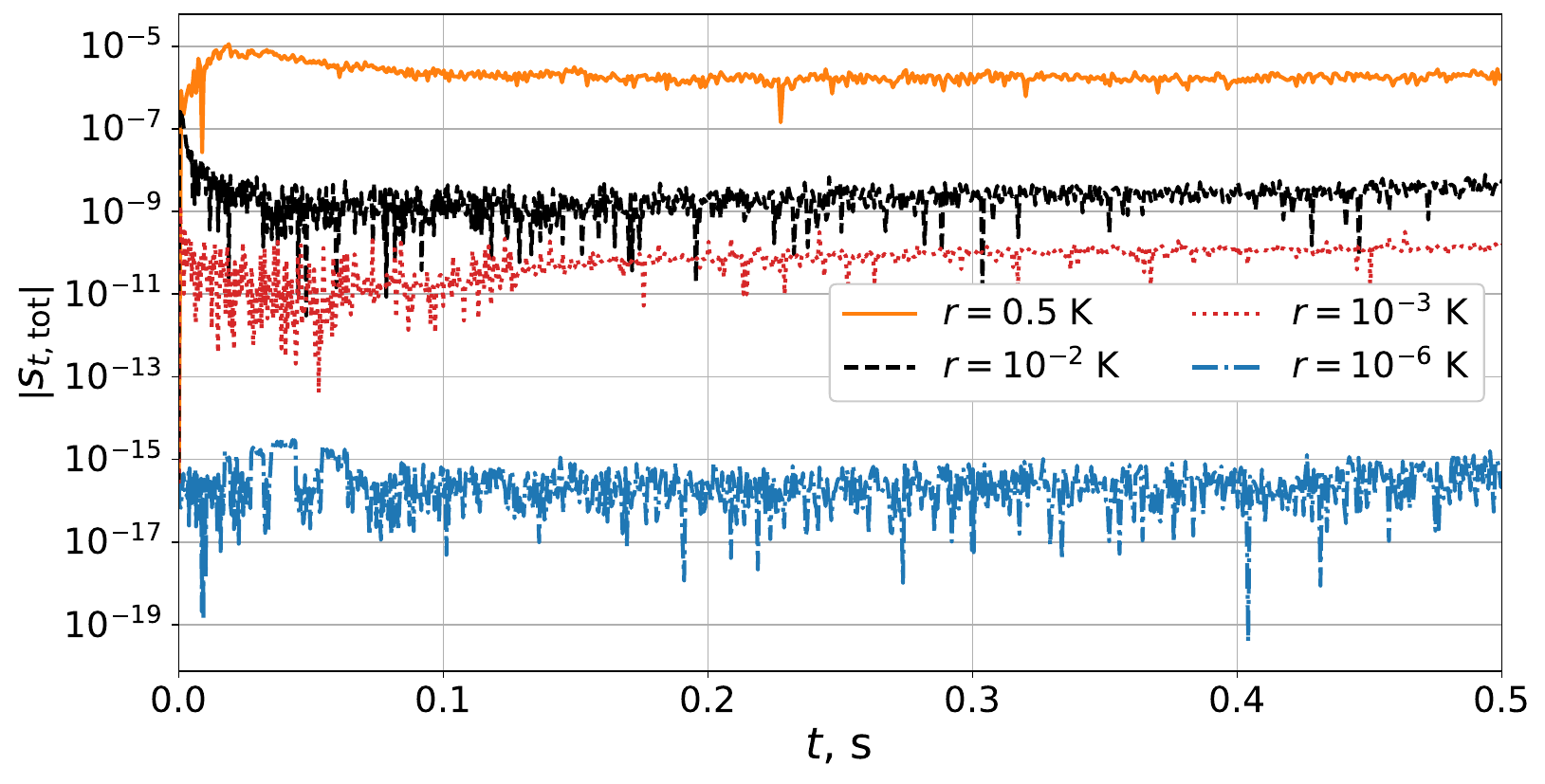}
    \caption{Absolute value of the entropy production rate over the course of the simulation.}
    \label{fig:entropy_production_rate_blastwave}
\end{figure}

\begin{figure}[h]
    \centering
    \includegraphics[width=0.7\textwidth]{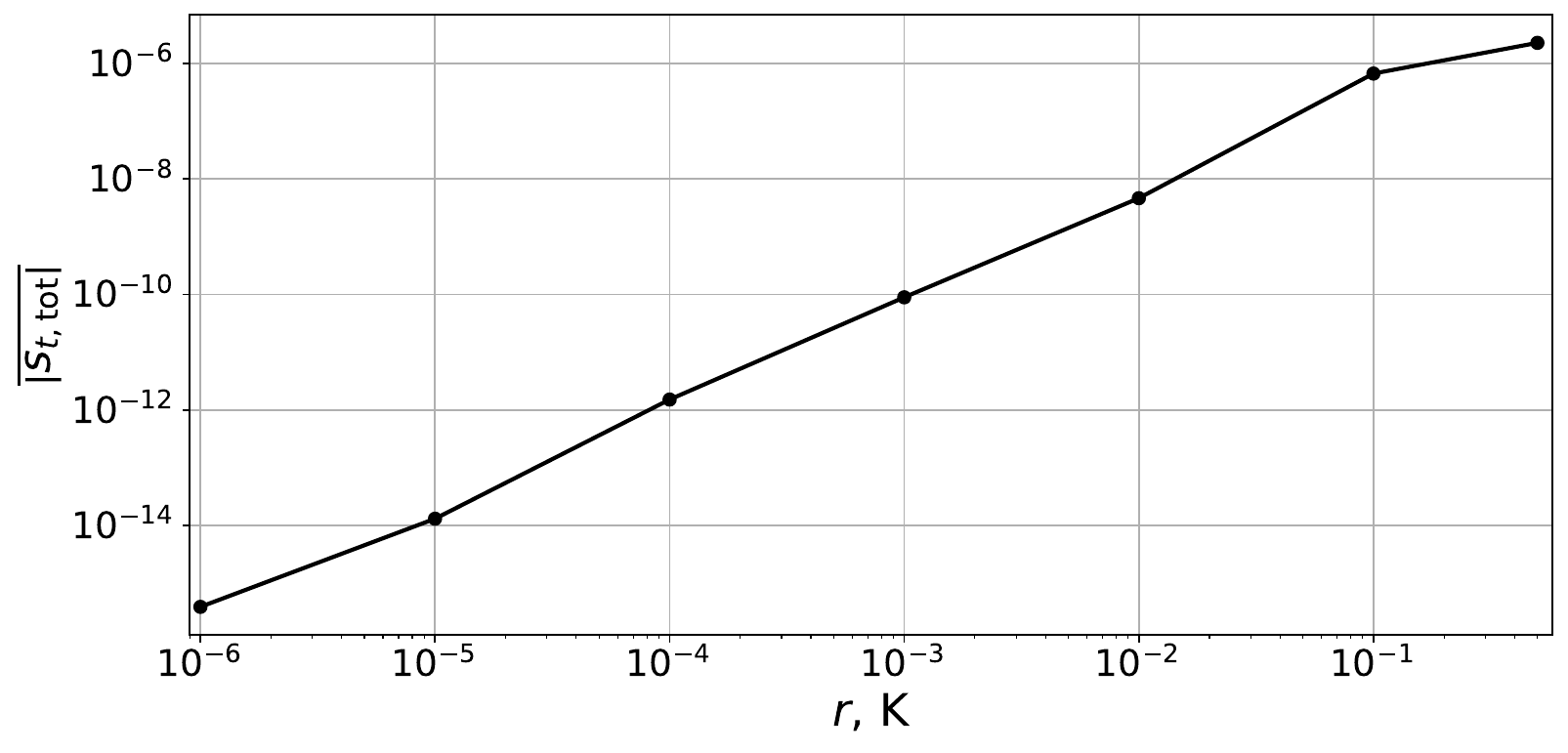}
    \caption{Time-averaged absolute value of the entropy production rate as a function of the parameter $r$.}
    \label{fig:entropy_production_rate_blastwave_convergence_r}
\end{figure}

Figure~\ref{fig:entropy_production_rate_blastwave} shows the total entropy production rate plotted over time for various values of the tolerance $r$ for a temperature discretization step $\Delta T$ of 1~K. Similar to the previous test case, it can be seen that whilst there is an error in the entropy production rate at higher values of $r$, it decreases rapidly when smaller values of $r$ are used. Figure~\ref{fig:entropy_production_rate_blastwave_convergence_r} shows the time-averaged absolute value of the entropy production rate as a function of $r$. The error is almost of the order of machine precision when $r$ is taken equal to 10$^{-6}$~K.

\subsection{Flow around a cylinder}
Finally, we consider a two-dimensional flow around a cylinder as an example of application of the developed flux function to a high-enthalpy flow with strong shocks. The free-stream parameters were chosen to be $v_{\infty}=5956$~m/s, $p_{\infty}=476$~Pa, $T_{\infty}=901$~K, based on the experimental HEG cylinder test-case~\cite{karl2003high}; the cylinder diameter is 0.045~m. The gas was assumed to be either molecular nitrogen with a molecular mass of $m=4.6517\times10^{-26}$ kg or molecular oxygen with a molecular mass $m=5.3134\times10^{-26}$ kg. The flow speed is approximately Mach 10.5, depending on the chosen model for the internal energy. Three models for the internal energy were considered:
\begin{enumerate}
    \item Calorically perfect molecular nitrogen flow with $\gamma=1.4$.
    \item Molecular oxygen flow with the infinite harmonic oscillator model for the vibrational spectrum with $\theta_v=2273.5$~K.
    \item Molecular oxygen flow with the cut-off anharmonic oscillator model for the vibrational spectrum with $\theta_v=2273.5$~K and $\theta_{v,\mathrm{anh}}=17.366$~K, with the cut-off at the dissociation energy of oxygen $D=59364$~K. This corresponds to 36 vibrational levels accounted for.
\end{enumerate}
For the energy and specific heat tables, a step size of $\Delta T=1$~K was used, and the tolerance $r$ in algorithm~\ref{alg:fluxes} was taken to be $10^{-5} \cdot \Delta T$.

\begin{figure}[h]
    \centering
    \includegraphics[width=0.32\textwidth]{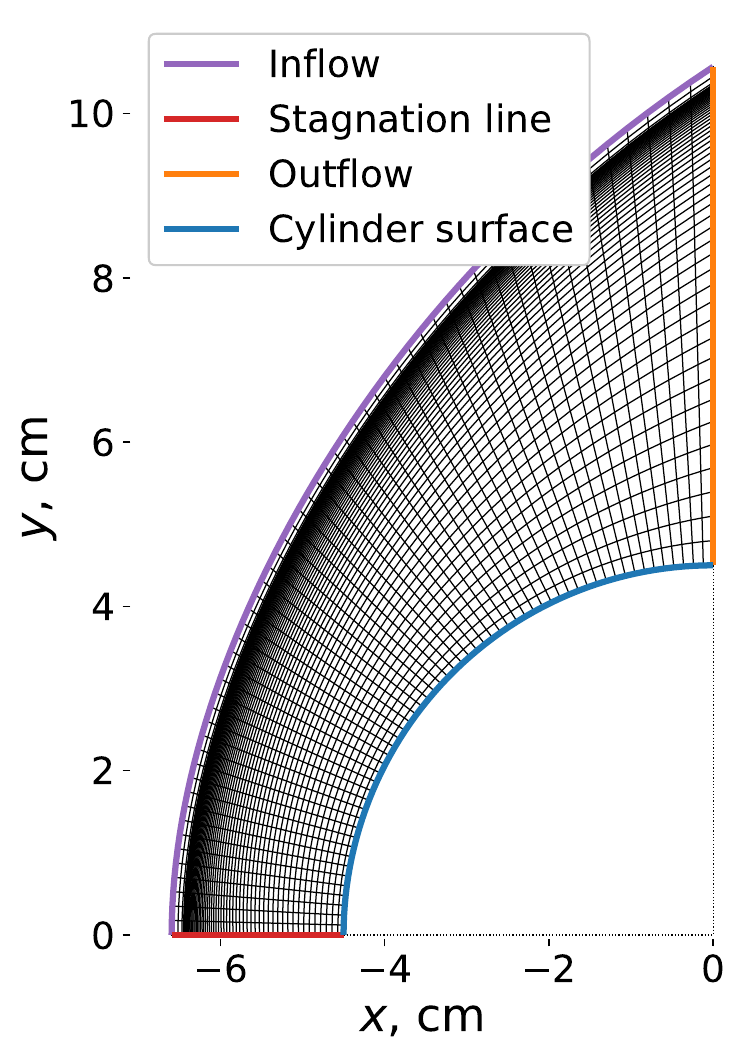}
    \caption{Example of the grid used for the simulations with the infinite harmonic oscillator model.}
    \label{fig:grid}
\end{figure}

The simulations were performed on simple shock-fitted structured $60\times60$ grids. A single cubic curve was fitted to the shock shape. As it does not fit the shock perfectly, some minor errors in the density and pressure are expected near certain regions of the shock, where the mesh and the shock are not well-aligned.
Figure~\ref{fig:grid} shows an example of the simulation grid, specifically, the grid used for simulations with the infinite harmonic oscillator model. The origin of the cylinder is located at $(x,y)=(0,0)$. In the analysis of the simulations presented further below, flow quantities are considered along the stagnation line, given by $y=0$, and the outflow boundary, given by $x=0$. The following boundary conditions were applied in the simulation: supersonic inflow at the inflow boundary (highlighted in purple), supersonic outflow for the outflow boundary (highlighted in orange), and slip conditions at the stagnation line (highlighted in red) and cylinder surface (highlighted in blue).

The developed entropy-conserving flux was used for the volume fluxes, and the dissipative flux~(\ref{eq:flux_diffusive}) was used for the surface fluxes in order to provide sufficient numerical diffusion for the strong shocks present in the flow. Due to the presence of strong shocks in the flow, the sub-cell shock capturing method of Hennemann et al.~\cite{hennemann2021provably} was used. 

The DLR TAU solver~\cite{mack2002validation,hannemann2010closely} was used as a benchmark solver for this case with a similar grid and parameters governing the gas properties. The AUSM+ flux~\cite{liou1996sequel} was used, along with local time-stepping and a first-order implicit Backward Euler solver. 

Firstly, we consider the simplest case, that of the calorically perfect molecular nitrogen flow with $\gamma = 1.4$. This case has been chosen to verify the basic solver components in the Trixi.jl implementation, such as handling of curvilinear structured meshes, boundary conditions, and shock capturing.

\begin{figure}[h]
    \centering
    \includegraphics[width=0.25\textwidth]{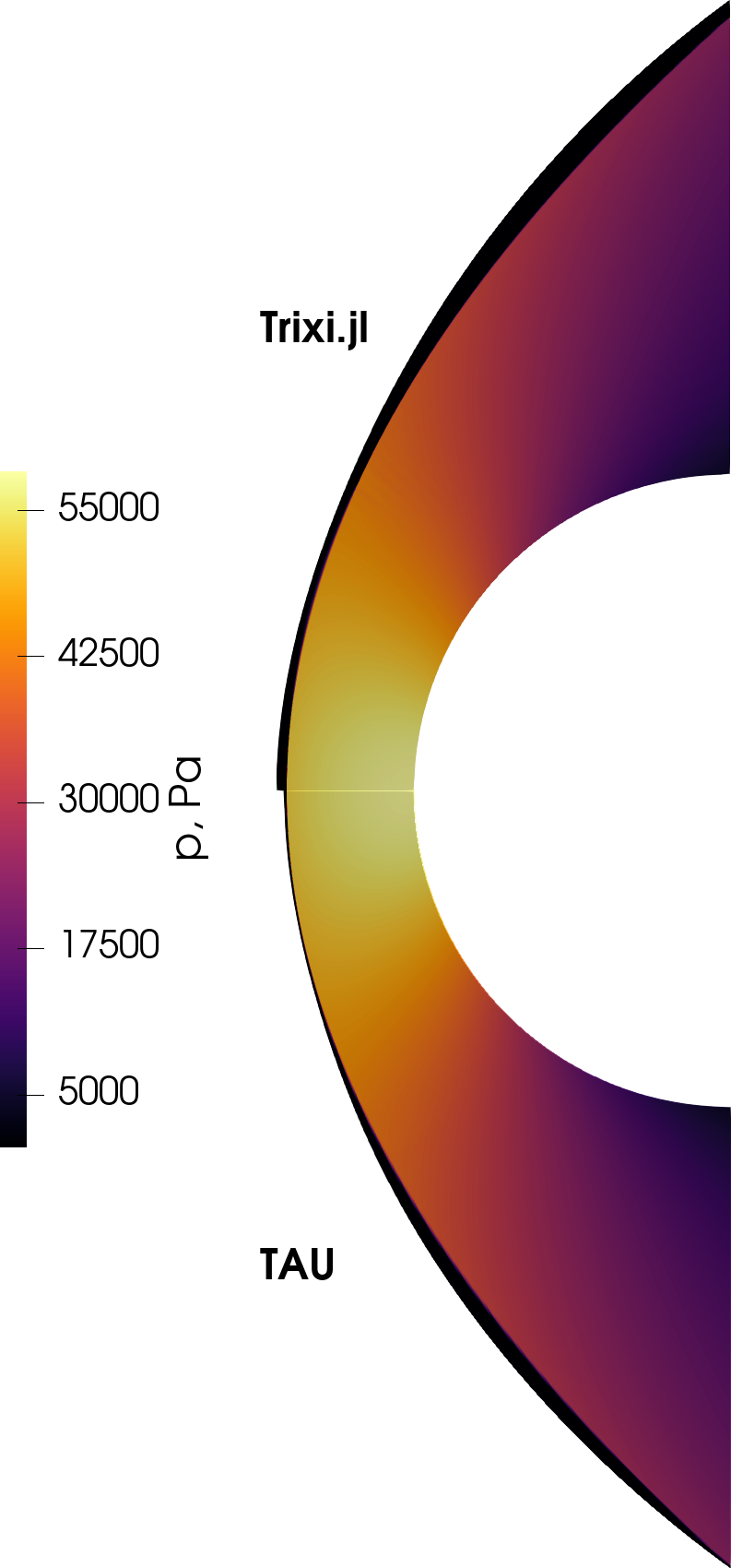}
    \caption{Flow-field pressure computed the developed entropy-conservative approach implemented in Trixi.jl (top) and using the TAU code (bottom) for the case of a constant specific heat ($\gamma=1.4$).}
    \label{fig:60x60_pressure_g14}
\end{figure}
Figure~\ref{fig:60x60_pressure_g14} shows the computed flow-field pressure for TAU and Trixi.jl simulations grid, with the Trixi.jl results computed using 2-nd order polynomials. Excellent agreement  in the shock position and overall pressure profiles can be observed between the solvers. The Trixi.jl solution used a grid with several larger cells in the pre-shock region, therefore inflow region is somewhat larger on the top subplot. For a more quantitative comparison, we evaluate the flow properties along the stagnation line.

\begin{figure}[h]
    \centering
    \includegraphics[width=0.45\textwidth]{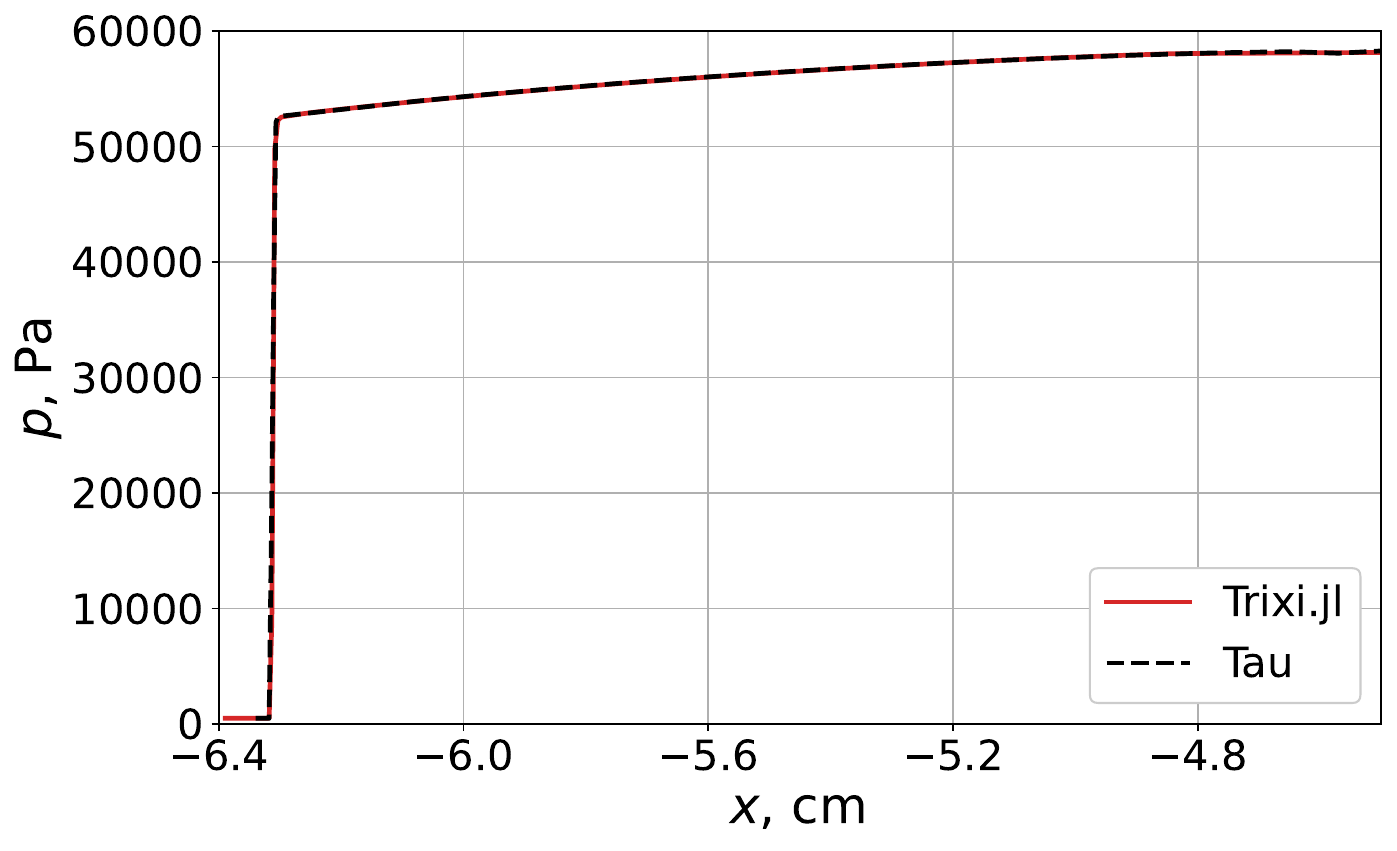}
    \includegraphics[width=0.45\textwidth]{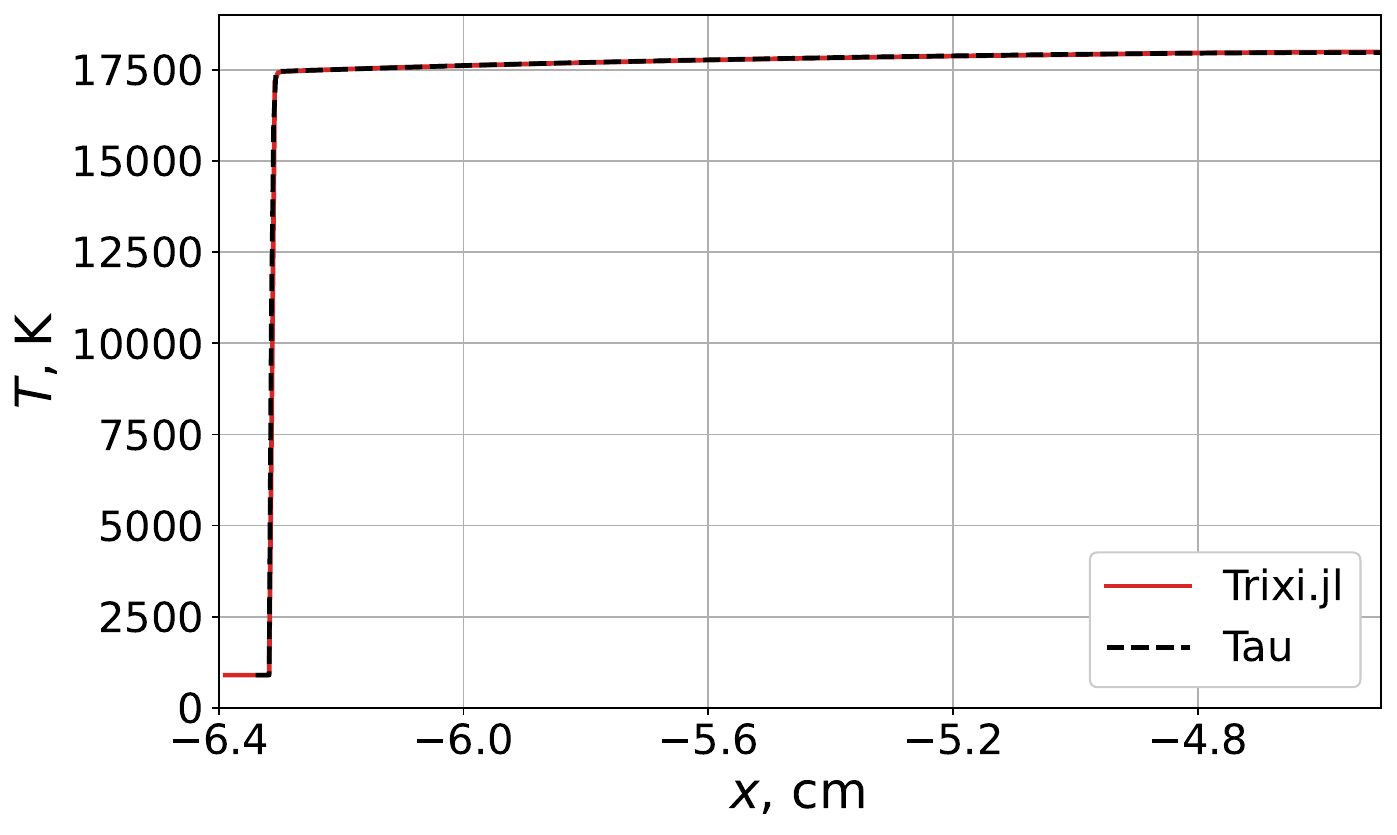}
    \caption{Pressure (left) and temperature (right) along the stagnation line for the case of $\gamma=1.4$.}
    \label{fig:60x60_stagn_g14}
\end{figure}

Figure~\ref{fig:60x60_stagn_g14} shows the pressure and temperature along the stagnation line (the surface of the cylinder is located at $x=-4.5$~cm) computed using TAU and Trixi.jl, with the Trixi.jl simulation using 2-nd order polynomials. Excellent agreement can be seen between the solvers, thus confirming the validity of the basic solver components.

\begin{figure}[h]
    \centering
    \includegraphics[width=0.25\textwidth]{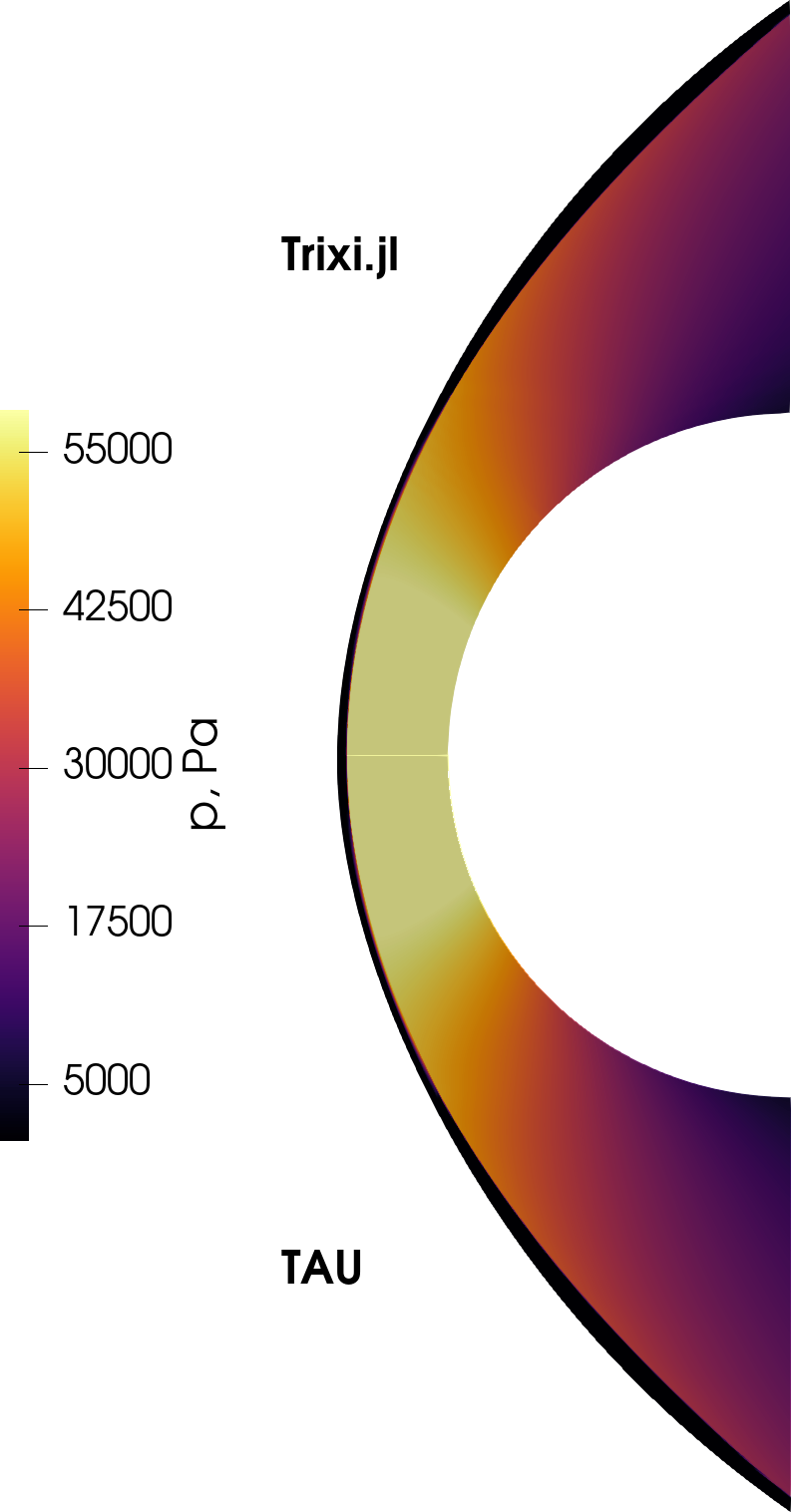}
    \caption{Flow-field pressure computed the developed entropy-conservative approach implemented in Trixi.jl (top) and using the TAU code (bottom) for the case of the infinite harmonic oscillator.}
    \label{fig:60x60_pressure_iho}
\end{figure}

Now we consider the cases with temperature-dependent specific heats.  Figure~\ref{fig:60x60_pressure_iho} shows the computed flow-field pressure for the case of the vibrational spectrum modelled by an infinite harmonic oscillator. Similar to the previous case, excellent qualitative agreement can be seen between the simulations in the whole domain.

\begin{figure}[h]
    \centering
    \includegraphics[width=0.45\textwidth]{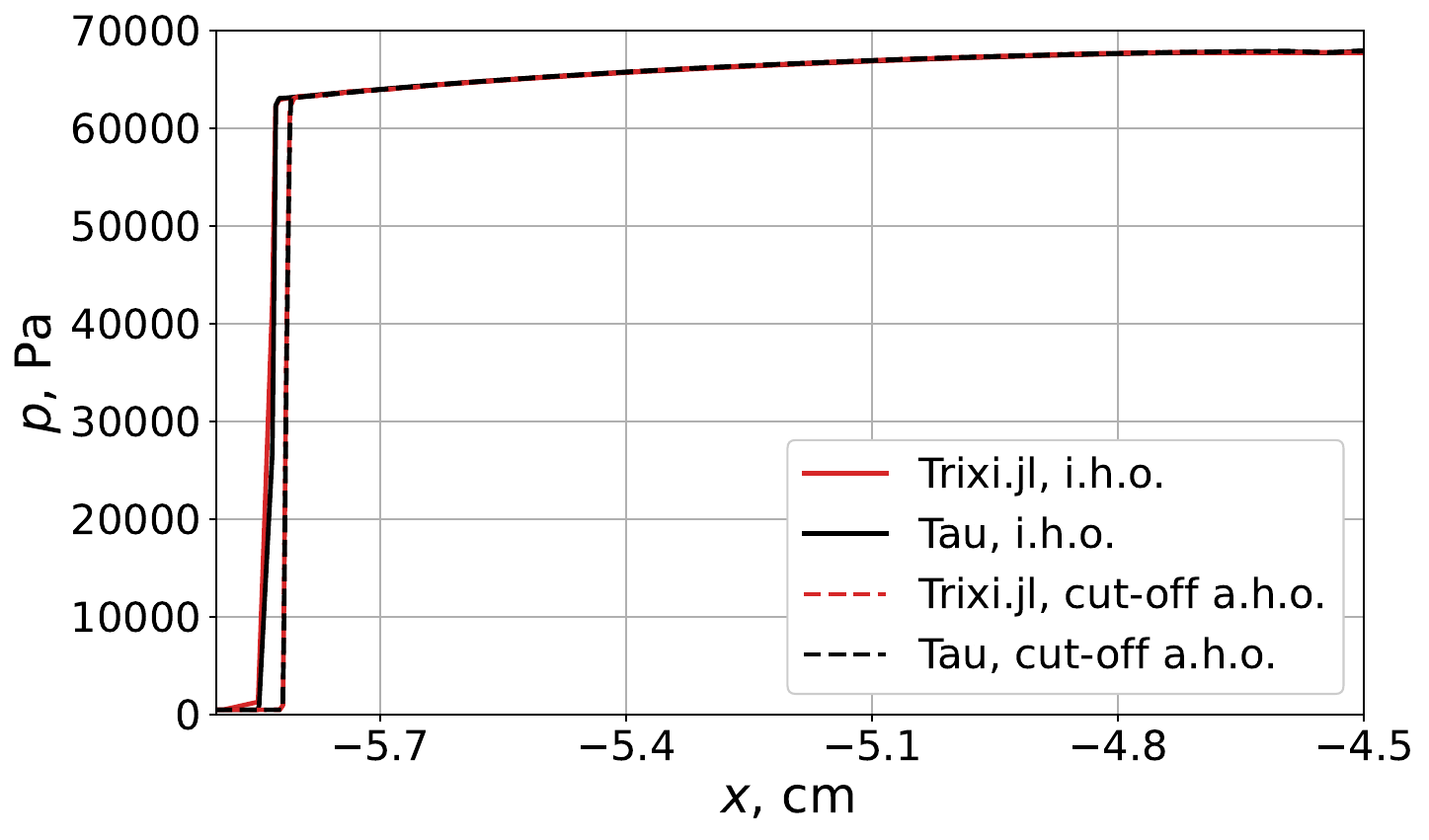}
    \includegraphics[width=0.455\textwidth]{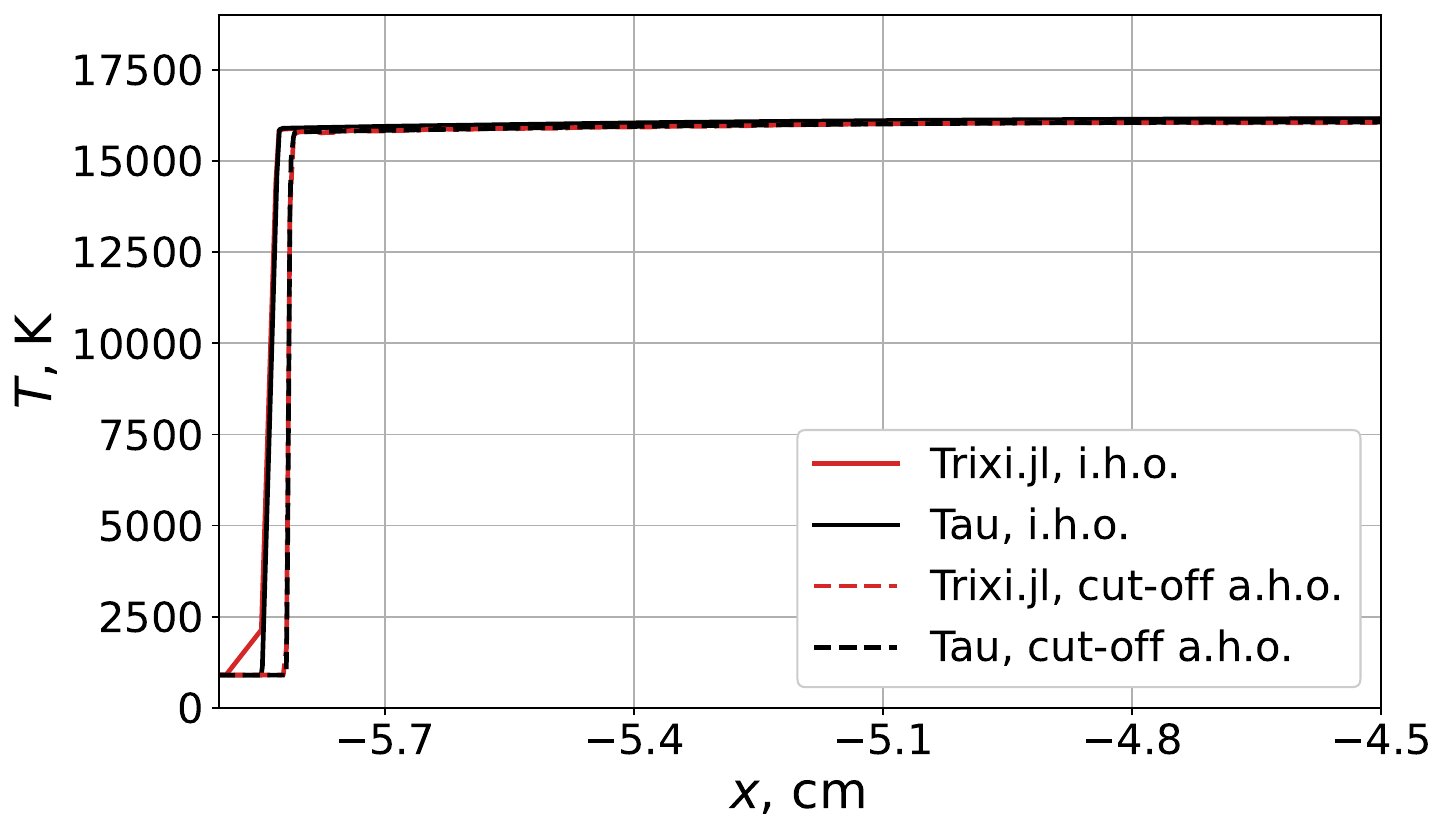}
    \caption{Pressure (left) and temperature (right) along the stagnation line for two different vibrational spectra models.}
    \label{fig:60x60_stagn_iho}
\end{figure}

For a more detailed analysis of this case and that of the cut-off anharmonic oscillator, we look at the flow-field properties along the stagnation line $y=0$ (assuming that the center of the cylinder is located at $(0,0)$). Figure~\ref{fig:60x60_stagn_iho} shows the pressure and temperature along the stagnation line (the surface of the cylinder is located at $x=-4.5$~cm) computed using TAU and Trixi.jl for the two different vibrational spectrum models, i.e. infinite harmonic oscillator and the cut-off anharmonic oscillator. Compared to the case of the calorically perfect gas with $\gamma=1.4$, both models for the vibrational spectrum have lower values of $\gamma$ (as seen on Fig.~\ref{fig:gamma}), thus leading to smaller shock stand-off distances~\cite{sinclair2017theoretical}. The use of the cut-off harmonic oscillator model leads to smaller values of $\gamma$ compared to the infinite harmonic oscillator model, and therefore the stand-off distance is smaller for this case. 

Finally, we analyse the flow quantities in the expansion region, namely, along the outflow boundary given by $x=0$.
Slightly more differences were observed between Trixi.jl simulations using different grid and polynomial orders in this region, therefore, more results are presented: additional simulations were carried out for 2-nd, 3-rd, and 4-th order polynomials on $30\times30$ and $60\times60$ grids.

\begin{figure}[h]
    \centering
    \includegraphics[width=0.45\textwidth]{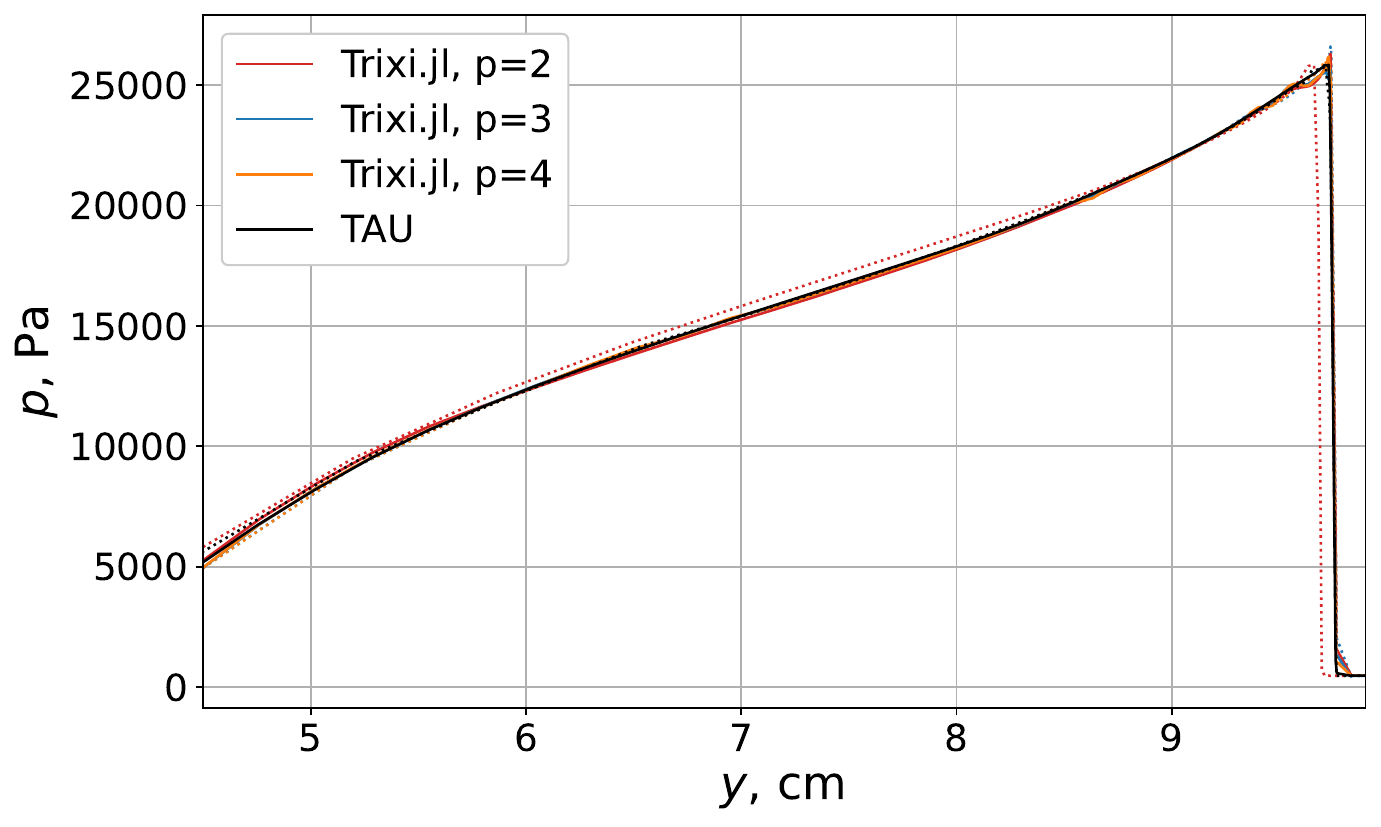}
    \includegraphics[width=0.45\textwidth]{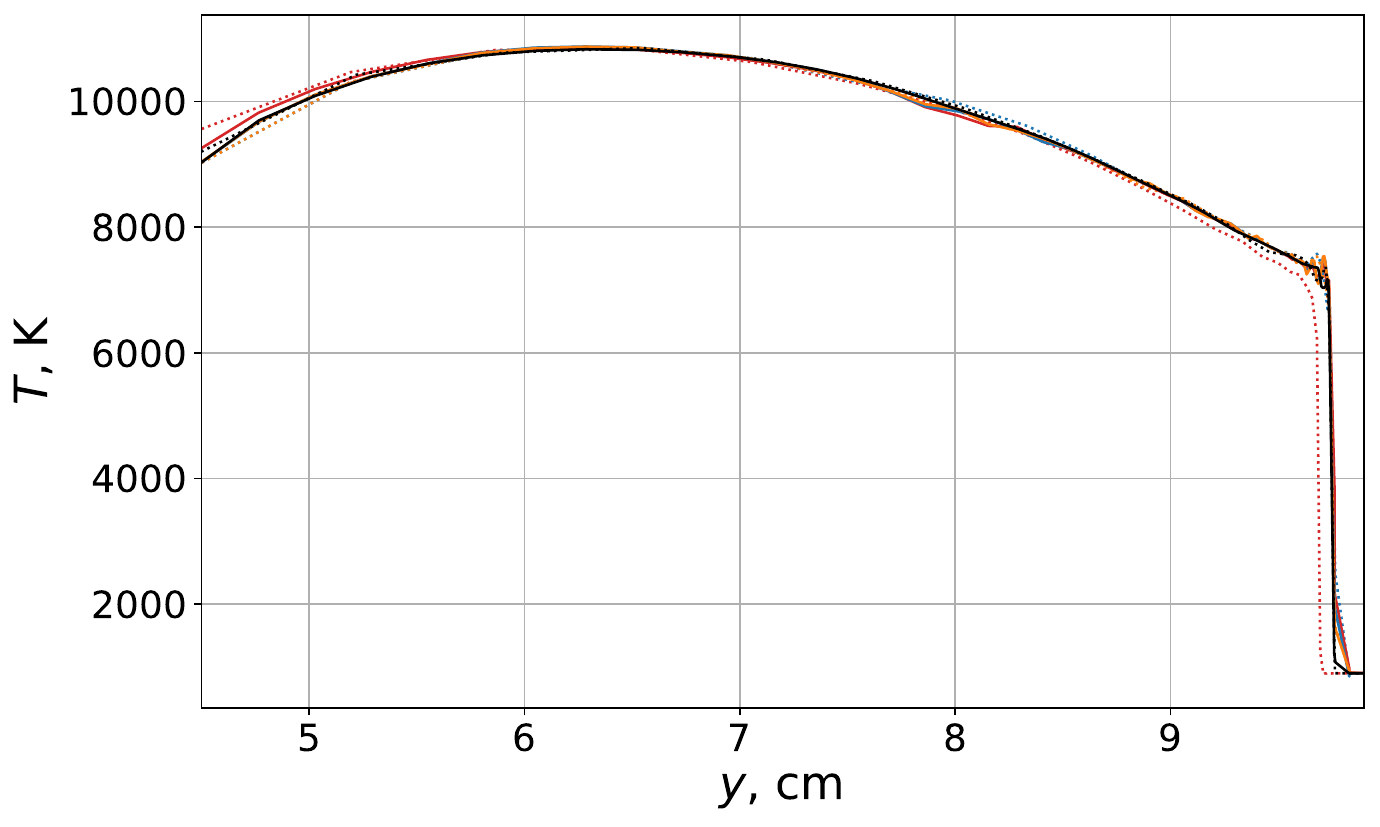}
    \caption{Pressure (left) and temperature (right) along the outflow boundary, infinite harmonic oscillator model. Solid lines correspond denote simulations conducted on a $60\times60$ grid, dotted lines denote simulations conducted on  a $30\times30$ grid.}
    \label{fig:60x60_should_iho}
\end{figure}

Figure~\ref{fig:60x60_should_iho} shows the pressure and temperature in the expansion region along the line $x=0$ (the surface of the cylinder is located at $y=4.5$~cm) for the infinite harmonic oscillator model of the vibrational spectrum. Solid lines correspond to the $60\times60$ grid, dotted lines correspond to the $30\times30$ grid.  We see that all simulation results agree very well, with only the 2-nd order DG simulation on the coarse $30\times30$ grid (dotted red line) deviating slightly in terms of pressure and temperature from the other results. The Trixi.jl-based DG simulations predict a sharper shock, but use of the higher-order polynomials also leads to the presence of irregularities (seen on the temperature profile at $y=8$~cm), as well as slightly stronger oscillations near the shock (seen on the temperature profile near $y=9.7$~cm).

\begin{figure}[h]
    \centering
    \includegraphics[width=0.45\textwidth]{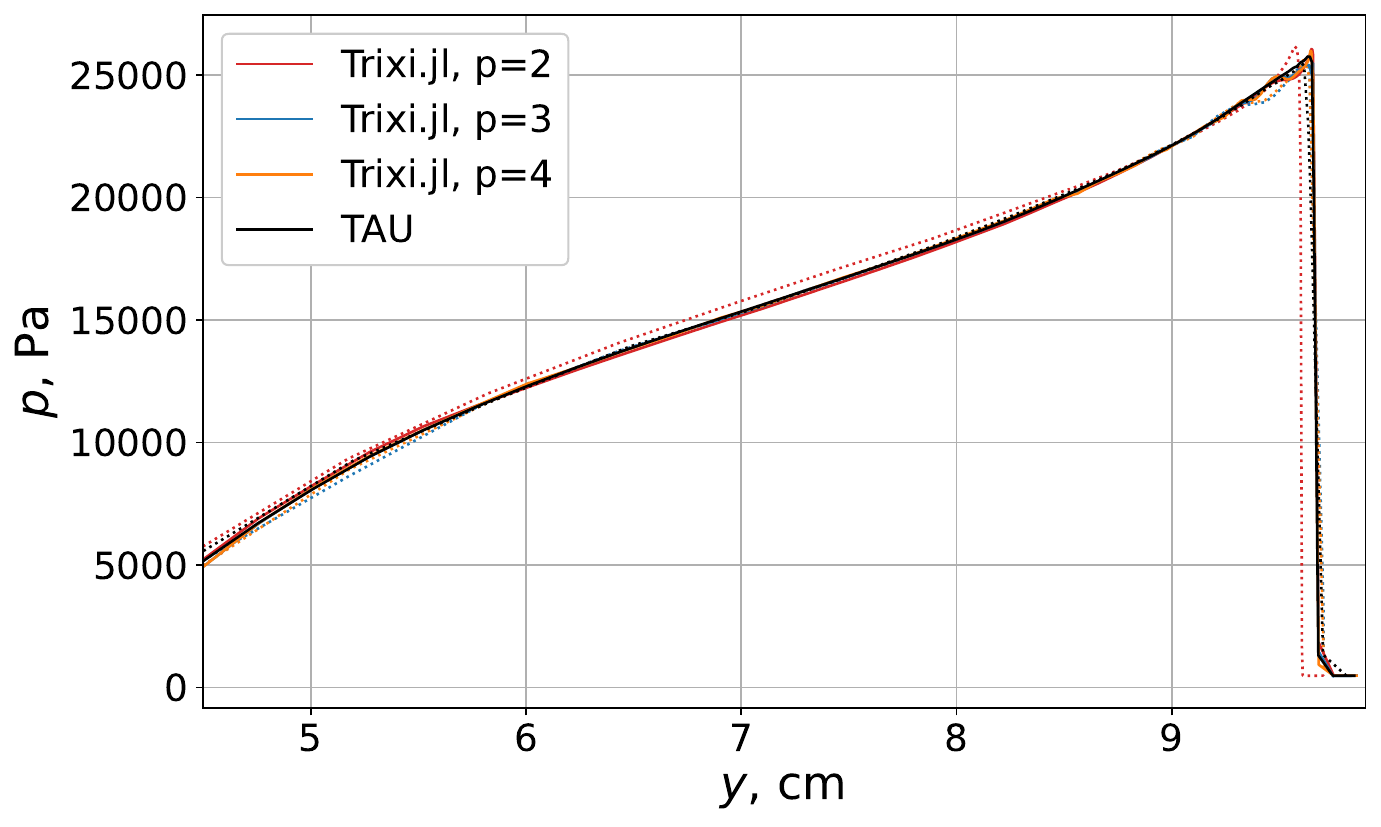}
    \includegraphics[width=0.45\textwidth]{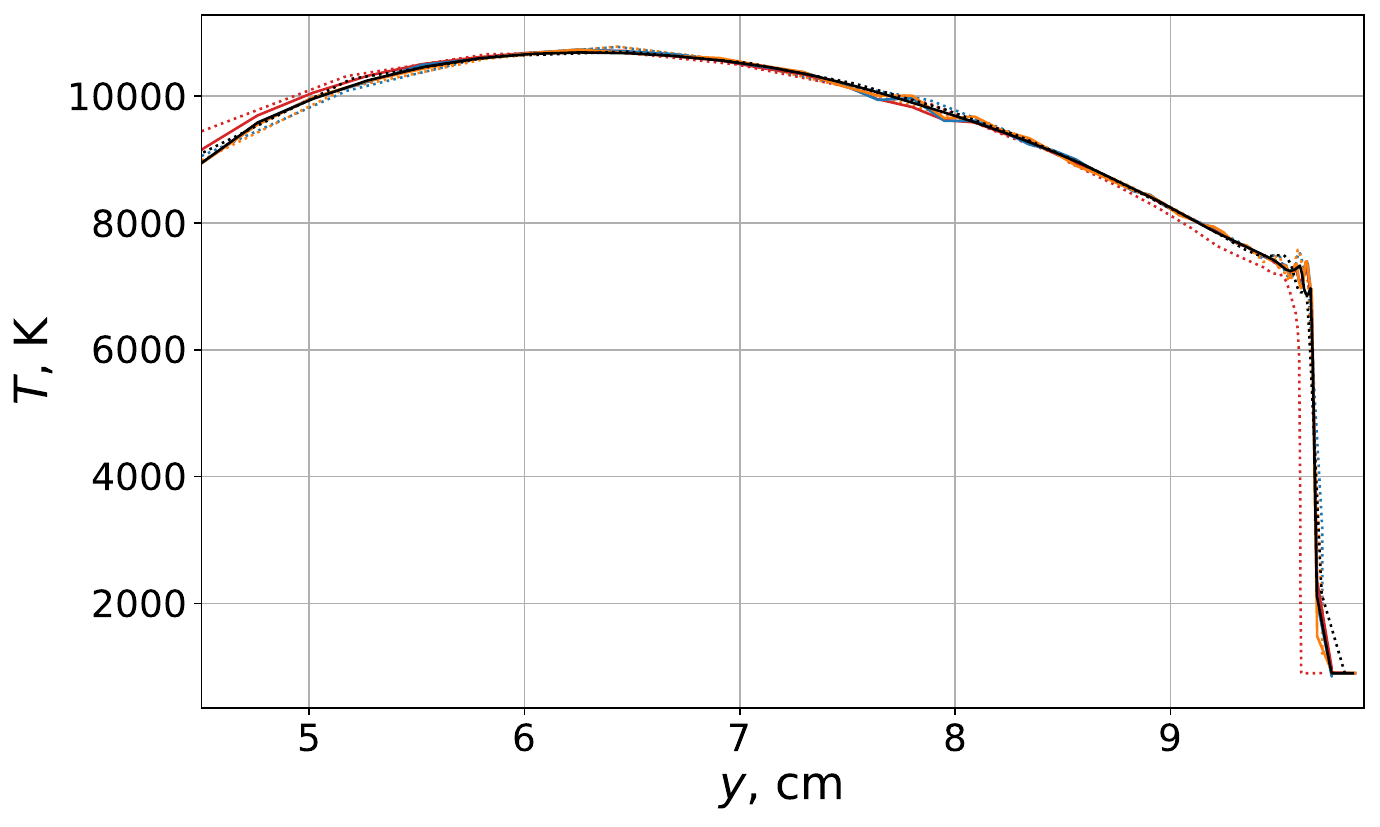}
    \caption{Pressure (left) and temperature (right) along the outflow boundary, cut-off anharmonic oscillator model. Solid lines correspond denote simulations conducted on a $60\times60$ grid, dotted lines denote simulations conducted on  a $30\times30$ grid.}
    \label{fig:60x60_should_anho}
\end{figure}

Finally, Fig.~\ref{fig:60x60_should_anho} shows the shows the pressure and temperature along the line $x=0$ for the cut-off anharmonic oscillator model. Similarly to the infinite harmonic oscillator case, only the DG simulation using 2-nd order polynomials on a $30\times30$ grid displays any noticeable discrepancy in terms of the pressure from the other results. The higher-order simulations also exhibit slight oscillations in the temperature; however, they were not found to affect the stability of the simulations in general.

Therefore, it can be concluded that the developed approach for entropy-stable simulations of high-enthalpy flows of gases with internal degrees of freedom provides accurate results, and has good properties in terms of computational performance.

\section{Conclusion}\label{sec13}
An approach to the computation of entropy-conserving fluxes for gases with arbitrary internal energies has been developed, based on the piece-wise linear approximation of the temperature-dependent internal energy and specific heat. For a calorically perfect gas, the fluxes have been shown to reduce to well-known expressions available in literature. A significant advantage of the proposed approach is its independence on the exact expressions for the internal energy, as they are only required in the pre-computation step, and the flow simulations only perform linear interpolation, the cost of which is independent of the exact expressions for the internal energy and specific heat capacities.

The flux function has been implemented in the Trixi.jl framework and simulations of a Mach 10.5 flow over a cylinder were performed for three different internal energy functions. Comparisons with computations carried out using the DLR TAU solver show excellent agreement in the flow-field quantities.

Further future extensions of the work include the generalization of the scheme to multi-species reacting flows, and its application to viscous flows.

\section*{Acknowledgments}
This work has been supported by the German Research Foundation within the research unit DFG-FOR5409. Georgii Oblapenko also acknowledges the support of the Humboldt Foundation for his guest research stay at the German Aerospace Center (DLR) Institute of Aerodynamics and Flow Technology in G\"{o}ttingen, during which the results with the DLR TAU solver had been obtained. The authors also thank Dr. Moritz Ertl of DLR Institute of Aerodynamics and Flow Technology G\"{o}ttingen for his help with the DLR TAU simulations.



\end{document}